\newtheorem{theorem}{Theorem}[section]
\newtheorem{proposition}{Proposition}[section]
\newtheorem{corollary}{Corollary}[section]
\newtheorem{lemma}{Lemma}[section]
\newtheorem*{theorem*}{Theorem}
\theoremstyle{definition}
\newtheorem{definition}{Definition}[section]
\newcommand*{\C}{\mathbb{C}}
\newcommand*{\Q}{\mathbb{Q}}
\newcommand*{\R}{\mathbb{R}}
\newcommand*{\Z}{\mathbb{Z}}
\newcommand*{\F}{\mathbb{F}}
\renewcommand*{\O}{\mathfrak{O}}
\renewcommand*{\Re}{\operatorname{Re}}
\title{A Fast Multiplication Algorithm and RLWE--PLWE Equivalence for the Maximal Real Subfield of the $2^r p^s$-th Cyclotomic Field}
\author{\small{Wilmar Bolaños, Antti Haavikko, Rodrigo M. Sánchez-Ledesma}}
\date{February 2025}
\begin{document}

\maketitle

\begin{abstract}
    This paper proves the RLWE--PLWE equivalence for the maximal real subfields of the cyclotomic fields with conductor $n = 2^r p^s$, where $p$ is an odd prime, and $r \geq 0$ and $s \geq 1$ are integers. In particular, we show that the canonical embedding as a linear transform has a condition number bounded above by a polynomial in $n$. In addition, we describe a fast multiplication algorithm in the ring of integers of these real subfields. The multiplication algorithm uses the fast Discrete Cosine Transform (DCT) and has computational complexity $\mathcal{O}(n \log n)$. Both the proof of the RLWE--PLWE equivalence and the fast multiplication algorithm are generalizations of the results in \cite{ahola2024fast}, where the same claims are proved for a single prime $p = 3$.
\end{abstract}

\noindent\textbf{Keywords:} ring learning with errors, polynomial learning with errors, fast multiplication, discrete cosine transform

\section{Introduction}

Lattice-based cryptography has emerged as a promising candidate for post-quantum cryptography, offering potential resilience against attacks by quantum computers. The \emph{Learning with Errors} (LWE) problem and its variants, \emph{Ring-LWE} (RLWE) and \emph{Polynomial-LWE} (PLWE), are fundamental hardness assumptions in the field of post-quantum cryptography (PQC). These problems involve finding approximate solutions to noisy linear systems, and RLWE and PLWE have a performance advantage over the unstructured LWE due to their inherent algebraic structure. 

The origins of RLWE and PLWE can be traced back to the seminal works of Stehlé et al. \cite{SSTX:2009:EPK} and Lyubashevsky et al. \cite{LPR:2013:ILL}. Since their discoveries, the relationship between the RLWE and PLWE problems has been an active area of research. RLWE is generally considered to be more secure because of its connection to well-established lattice problems. Also, the abstract setting of the RLWE problem is fitting for theoretical proofs. On the other hand, PLWE offers practical advantages regarding implementation and performance, particularly for multiplication operations. 

When it comes to other areas of modern cryptography, fast multiplication algorithms play a substantial role also in \emph{Homomorphic Encryption} (HE), especially in schemes that derive from the structured LWE variants. In general, HE schemes use larger parameters than PQC schemes and hence benefit more from improvements in the asymptotic complexity of multiplication.

Prior research has established the equivalence between RLWE and PLWE for specific families of number fields and polynomials. For non-cyclotomic number fields, the work of Ahola et al. \cite{ahola2024fast} demonstrated this equivalence for the maximal totally real subfield of the $2^r 3^s$-th cyclotomic field, along with a fast multiplication algorithm based on the Discrete Cosine Transform (DCT). Also, the work in \cite[Section 4]{RSW:2018:RPP} focused on two large non-cyclotomic families of polynomials and proved that the related Vandermonde matrices have polynomially bounded condition numbers.

In the cyclotomic case, notable contributions include the work of Ducas and Durmus \cite{DD:2017}, who proved the equivalence for cyclotomic number fields of degree $2^k p$ or $2^kpq$, where $p$ and $q$ are primes and $q \neq p$. Furthermore, Di Scala et al. \cite{SSS:2024:RPNE} showed that RLWE and PLWE are not equivalent in general if no restrictions are imposed on the conductor of the cyclotomic field. However, Blanco-Chacón \cite{Blanco:2020:REC} proved that the RLWE--PLWE equivalence holds if the conductor of the cyclotomic field is divisible by a bounded number of primes. Later, Araujo \cite{dearaujo2025condition} showed the equivalence for any cyclic number field of odd prime degree, which by the Kronecker--Weber theorem is a subfield of some cyclotomic field.

Regarding the hardness of the structured lattices, \cite{cramerducaswesolowsky} uses Stickelberger ideals from class field theory to prove the existence of a quantum polynomial-time attack against ideal lattices of cyclotomic number fields. Their results assume the Generalized Riemann Hypothesis and impose some class-field theoretical restrictions on the cyclotomic fields. The exact claim is the following:
\begin{theorem*}
    There exists a quantum polynomial-time algorithm, that for a cyclotomic number field $K$ of prime power conductor and any ideal $\mathfrak{a}$ of $\O_K$, returns an element $v \in \mathfrak{a}$ with Euclidean norm
    \begin{align*}
        ||v|| \leq \operatorname{N}(\mathfrak{a})^{1/n} \operatorname{exp}(\tilde{\mathcal{O}}(\sqrt{n})).
    \end{align*}
\end{theorem*}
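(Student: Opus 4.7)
The plan is to follow the three-stage quantum reduction of Cramer, Ducas and Wesolowski: (i) solve the Principal Ideal Problem (PIP) for the input ideal, (ii) recover a short generator of a principal ideal in the cyclotomic field, and (iii) reduce the general ideal case to the principal case via the Stickelberger ideal. Throughout, fix $K = \Q(\zeta_{p^k})$ of degree $n = \varphi(p^k)$, write $G = \Gal(K/\Q)$, let $\operatorname{Log} \colon K^\times \to \R^{n/2}$ be the logarithmic embedding and $\Lambda \subset \R^{n/2}$ the log-unit lattice.

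First, I would invoke the Biasse--Song quantum algorithm, which under GRH runs in time polynomial in $\log|\Delta_K|$ and $\log \Norm(\mathfrak{a})$ and, when $\mathfrak{a}$ is principal, returns some generator $\alpha$. The returned $\alpha$ is typically astronomically large, so next I would convert it into a short generator: letting $\mathbf{m} = \operatorname{Log}\alpha - \tfrac{1}{n}\log\Norm(\alpha)\cdot\mathbf{1}$, I would round $\mathbf{m}$ against the basis of cyclotomic units. The key input is the Cramer--Ducas--Peikert--Regev analysis, which, for prime-power conductor, shows that the cyclotomic units form a basis of a finite-index sublattice of $\Lambda$ whose dual basis has $\ell^\infty$-norm $\tilde{\mathcal{O}}(1/\sqrt{n})$. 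Rounding then yields a unit $u$ with $\beta := u\alpha$ satisfying $\|\beta\| \leq \Norm(\mathfrak{a})^{1/n}\exp(\tilde{\mathcal{O}}(\sqrt{n}))$, solving the principal case.

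For a non-principal $\mathfrak{a}$, I would use Stickelberger's theorem: the Stickelberger ideal $S \subset \Z[G]$ annihilates the class group, and for prime-power conductor it has a concrete description due to Washington--Ku\v{c}era. The idea is to find $\theta = \theta^+ - \theta^- \in S$ with $\theta^\pm \in \Z_{\geq 0}[G]$ of controlled $\ell_1$-norm such that $\theta \cdot [\mathfrak{a}^{-1}] = [\mathfrak{b}]$ for a small prime-power product $\mathfrak{b}$ whose class is easy to trivialise; then $\mathfrak{a}^{-\theta} \cdot \mathfrak{b}^{-1}$ is principal, so the PIP oracle plus the rounding step of the previous paragraph yields a short generator $\gamma$, and from $\gamma$ one extracts an element $v \in \mathfrak{a}$ by Galois averaging and multiplication by the known small-norm factors. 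The norms accumulate multiplicatively through the $\ell_1$-weight of $\theta$, so the total inflation is $\exp(\tilde{\mathcal{O}}(\sqrt{n}))$ provided $\|\theta\|_1 = \tilde{\mathcal{O}}(\sqrt{n})$. The class-group decoding step itself is executed quantumly by random walks over small prime ideals: GRH, via Bach's bound, guarantees that primes of norm $\operatorname{poly}(n)$ generate $\mathrm{Cl}(\O_K)$, and combined with the Biasse--Song PIP subroutine this yields a polynomial-time quantum procedure for finding $\theta$.

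The main obstacle, and the heart of the Cramer--Ducas--Wesolowski argument, is the close-vector bound on the Stickelberger lattice: one must show that the target class $[\mathfrak{a}]$ can be written as $\theta \cdot [\mathfrak{b}]$ for some $\theta \in S$ of $\ell_1$-norm $\tilde{\mathcal{O}}(\sqrt{n})$. This is precisely a bounded-distance decoding (or rather covering-radius) statement for $S$ inside the augmentation-zero sublattice of $\Z[G]$, and its proof for prime-power conductor uses the explicit $\Q$-basis of $S$ given by the fractional parts $\{a/p^k\}$ together with a Babai-style rounding whose quality depends delicately on the Gram matrix of that basis; a less refined bound would produce an inflation $\exp(\tilde{\mathcal{O}}(n))$ and yield nothing better than Minkowski. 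Once this geometric estimate on $S$ is in hand, the three steps above compose, the GRH-based Biasse--Song and class-group oracles run in quantum polynomial time, and the output $v \in \mathfrak{a}$ meets the stated bound $\|v\| \leq \Norm(\mathfrak{a})^{1/n}\exp(\tilde{\mathcal{O}}(\sqrt{n}))$.
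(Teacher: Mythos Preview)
The theorem you are attempting to prove is not actually proved in the paper. It appears in the introduction as a quoted result from \cite{cramerducaswesolowsky} (Cramer--Ducas--Wesolowski), stated there only to motivate why one might want to move away from full cyclotomic fields to their maximal real subfields. The paper provides no argument for it whatsoever; it simply cites the original source and immediately notes that the attack does not transfer to ideals of $\O_{K^+}$.

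Your proposal is a reasonable high-level outline of the Cramer--Ducas--Wesolowski strategy itself, so in that sense it tracks the ``paper's proof'' insofar as the paper's proof \emph{is} the CDW paper. A couple of points of caution on the sketch, though: the reduction from general ideals to principal ones is not quite ``find $\theta \in S$ with $\theta\cdot[\mathfrak{a}^{-1}] = [\mathfrak{b}]$''; rather one uses the Stickelberger relation to write a fixed multiple of the class of $\mathfrak{a}$ as a short product of small primes, and the $\ell_1$-bound you need is on the exponent vector in that product, not directly on an element of $S$. Also, the $\tilde{\mathcal{O}}(\sqrt{n})$ bound in CDW comes from the covering radius of the log-unit lattice combined with the class-relation step, and the argument requires (and the paper explicitly mentions) both GRH and additional class-field-theoretic assumptions; your sketch acknowledges GRH but glosses over the latter. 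These are refinements rather than fatal gaps, but since there is no proof in the present paper to compare against, the right move is simply to cite \cite{cramerducaswesolowsky} as the paper does.
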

Note that the ideals in the ring of integers of the maximal real subfields of cyclotomic fields are not ideals in the ring of integers of the cyclotomic field containing the subfield. Therefore, the quantum attack of \cite{cramerducaswesolowsky} does not apply to the maximal real subfields.

This paper extends the results of \cite{ahola2024fast} by generalizing the fast multiplication algorithm and the equivalence between RLWE and PLWE to the maximal real subfield of the $2^r p^s$-th cyclotomic field, where $p$ is an arbitrary odd prime. This generalization comes with many technical lemmas and verifications guaranteeing that the security and speed of the RLWE or PLWE schemes are not compromised. The main contributions of this paper are:
\begin{enumerate}
    \item A proof of the equivalence between RLWE and PLWE for maximal real subfields of cyclotomic fields with conductor $n = 2^r p^s$.
    \item A fast multiplication algorithm with quasilinear complexity for elements in the ring of integers of these subfields, utilizing the Discrete Cosine Transform.
    \item Explicit algorithms for the change of basis between the power basis and the modified Chebyshev basis, enabling efficient computations regardless of the basis.
\end{enumerate}

While the use of maximal real cyclotomic polynomials in PLWE is new, they are not merely an artifact of \cite{ahola2024fast} and this work. The study of the relevant security properties of PLWE schemes under these polynomials, along with the development of fast multiplication techniques, is a result of a growing interest in finding applicable non-cyclotomic PLWE instances. Regarding applications, the support for the maximal real polynomials has already been incorporated into new cryptographic libraries (see LATTIGO \cite{lattigo}).

The structure of the paper is as follows. In Section \ref{sec:preliminaries}, we introduce the necessary background on maximal real cyclotomic polynomials, matrix norms, and properties of certain cosine matrices related to the canonical embedding of the number fields at play. We give bounds on the Frobenius norms of these cosine matrices and use the result in the proofs of the two main theorems. Section 2 also provides the full definitions of the RLWE and PLWE distributions.

Section \ref{sec:reduction_formulas} covers the $\Q$-linear dependencies between elements of the form $2\cos(2 \pi j / n)$ for two cases of conductor, namely $n = p^s$ and $n = 2^r p^s$. Throughout the paper, we study these two cases for $n$ separately. We derive explicit formulas for the minimal polynomials $\Psi_n(x)$ of the primitive elements $\psi_n = 2 \cos(2\pi/n)$ given in the basis of modified Chebyshev polynomials $V_j(x)$ that are introduced in Section \ref{sec:preliminaries}. We describe in detail the reduction of polynomials to small degree representatives in the polynomial quotient ring $\Z[x] / (\Psi_n(x))$, again using the Chebyshev basis. These formulas will play an integral part in the proof of all four main theorems of the paper.

Section \ref{sec:plwe_rlwe_equivalence} tackles the PLWE--RLWE equivalence of the maximal real subfields of cyclotomic fields with conductor $n$ as above. The two main theorems of Section \ref{sec:plwe_rlwe_equivalence} combine to the following.
\begin{theorem*}
    Let $p \geq 3$ be a prime, $s \geq 1$, and $r \geq 0$. Fix $n = 2^r p^s$. Then PLWE and RLWE are equivalent for the maximal real subextension of the $n$-th cyclotomic field.
\end{theorem*}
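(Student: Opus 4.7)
The plan is to establish the equivalence by proving a polynomial-in-$n$ upper bound on the condition number of the linear transformation that converts between the power basis representation used by PLWE and the canonical embedding used by RLWE. Once such a condition number bound is in place, the standard reduction template (as in \cite{DD:2017, Blanco:2020:REC, ahola2024fast}) yields reductions in both directions: a well-conditioned linear map turns spherical Gaussian errors into errors whose covariance eigenvalues are polynomially controlled, which is tolerated by both the RLWE-to-PLWE and PLWE-to-RLWE reductions.

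Let $K_n = \Q(\psi_n)$ with $\psi_n = 2\cos(2\pi/n)$, and let $d = \varphi(n)/2 = [K_n : \Q]$. The object to bound is the Vandermonde-type matrix $V$ with entries $V_{jk} = \sigma_j(\psi_n)^k = (2\cos(2\pi a_j/n))^k$, where the $a_j$ run through a set of representatives of the Galois orbits of primitive $n$-th roots of unity and $0 \leq k \leq d-1$. I would factor $V = C \cdot T$, where $T$ is the change of basis matrix from the power basis $\{\psi_n^k\}$ to the modified Chebyshev basis $\{V_k(\psi_n)\}$ inside $\Z[x]/(\Psi_n(x))$, and $C$ is the cosine matrix with $C_{jk} = V_k(2\cos(2\pi a_j/n)) = 2\cos(2\pi a_j k / n)$. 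The Frobenius norms of $C$ and $C^{-1}$ are already handled by the cosine matrix bounds of Section \ref{sec:preliminaries}, since $C$ is essentially a rescaled DCT-type orthogonal matrix; the remaining task is to control $T$ and $T^{-1}$.

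Bounding $T$ is comparatively direct: each column encodes the reduction of $\psi_n^k$ into Chebyshev coordinates using the identity $V_i(x)\,V_1(x) = V_{i+1}(x) + V_{i-1}(x)$ followed by reduction modulo $\Psi_n(x)$, and the resulting coefficients grow only polynomially in $n$. The main obstacle is $T^{-1}$, which must absorb the effect of reducing high-index Chebyshev polynomials back into the power basis while accounting for the explicit expansion of $\Psi_n(x)$ in the Chebyshev basis. Here I would invoke the explicit reduction formulas of Section \ref{sec:reduction_formulas}, treating the cases $n = p^s$ and $n = 2^r p^s$ with $r \geq 1$ separately, and verify by induction that iterated reductions modulo $\Psi_n(x)$ do not inflate the coefficients beyond $\operatorname{poly}(n)$. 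Once both $\lVert T\rVert_F$ and $\lVert T^{-1}\rVert_F$ are polynomially bounded, the submultiplicativity $\lVert V\rVert_2 \cdot \lVert V^{-1}\rVert_2 \leq \lVert C\rVert_F \lVert T\rVert_F \cdot \lVert T^{-1}\rVert_F \lVert C^{-1}\rVert_F$ gives the desired polynomial condition number, and the RLWE--PLWE equivalence follows in both directions.
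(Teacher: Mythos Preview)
Your proposal has two genuine gaps.

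First, the matrix you call $C$, with entries $2\cos(2\pi a_j k/n)$ where the $a_j$ range over residues coprime to $n$, is \emph{not} the matrix $\mathbf{C}_{N+1}$ whose condition number is bounded in Section~\ref{sec:preliminaries}. Your $C$ is exactly the paper's $\mathbf{M}$: an $m\times m$ matrix with rows indexed only by the coprime residues. The matrices $\mathbf{C}_{N+1}$ of Lemmas~\ref{lemma:cond_number_of_C_N+1} and~\ref{lemma:cond_number_2rps} are strictly larger, of size $(N+1)\times(N+1)$, with rows indexed by \emph{all} residues and with extra columns $m,\ldots,N$; it is only for these enlarged matrices that the near-orthogonality computations of Lemmas~\ref{lemma:symmetric_dct} and~\ref{lemma:dct2rps} go through. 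Passing from the bound on $\mathbf{C}_{N+1}$ to a bound on $\mathbf{M}$ is the entire technical content of the paper's proof: after a row permutation, $\mathbf{M}$ sits as the top-left block of $\mathbf{C}_{N+1}$, and the reduction formulas of Section~\ref{sec:reduction_formulas} are used to build a sparse column-elimination matrix $\mathbf{F}$ with $\|\mathbf{F}\|_F^2 \le N$ that kills the top-right block. This block manipulation is what lets one read off $\|\mathbf{M}^{-1}\|_F$ from $\|\mathbf{C}_{N+1}^{-1}\|_F$ and $\|\mathbf{R}^{-1}\|_F$. That step is precisely what you have assumed away by declaring the bounds on $C$ ``already handled.''

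Second, your change-of-basis matrix $T$ cannot be polynomially bounded. Writing $x^k=(e^{i\theta}+e^{-i\theta})^k$ and expanding binomially shows that the coefficient of $V_j(x)$ in $x^k$ is a binomial coefficient of size up to $\binom{k}{\lfloor k/2\rfloor}\sim 2^k/\sqrt{k}$; equivalently, iterating the recurrence $xV_i=V_{i+1}+V_{i-1}$ produces Pascal-triangle growth. Thus $\|T\|_F$ is exponential in $m$, and your claim that ``the resulting coefficients grow only polynomially in $n$'' is false. No modular reduction is involved here since all the monomials $x^k$ with $k<m$ already have degree below $m$. The paper avoids this issue entirely by taking $\mathbf{M}$ itself (the canonical embedding in the Chebyshev basis) as the matrix whose condition number is to be bounded, rather than factoring the power-basis Vandermonde through $T$; the reduction formulas of Section~\ref{sec:reduction_formulas} enter not to control a change of basis but to perform the block elimination inside $\mathbf{C}_{N+1}$ described above.
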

We prove that the condition number of the canonical embedding matrix is bounded above by a polynomial in $n$. By the results in \cite{DD:2017,RSW:2018:RPP}, we know that this is sufficient to show the equivalence of the PLWE and RLWE problems.

On top of the theoretical advancements, Section \ref{sec:fast_multiplication_via_DCT} introduces a method for quasilinear multiplication of two elements in the polynomial quotient ring $\Z[x] / (\Psi_n(x))$. The fast multiplication in $\Z[x]$ makes use of the \emph{Discrete Cosine Transform} to compute a linear convolution in the modified Chebyshev basis. Furthermore, we show that computing the remainder in the quotient ring can be done with linear complexity in $n$ by using the reduction formulas of Section \ref{sec:reduction_formulas}. Lastly, we provide references that show that the change of basis computations between the canonical power basis and the Chebyshev basis can be done also with quasilinear complexity. These findings are stated as the second main theorem.
\begin{theorem*}
For $n = 2^r p^s$ with $r \geq 0$ and $s \geq 1$, given two polynomials $a(x)$, $s(x) \in \Z[x] / (\Psi_n(x))$ in the power basis, their product $a(x) \cdot s(x) \in \Z[x] / (\Psi_n(x))$ can be computed with asymptotic complexity $\mathcal{O}(n \log n)$.
\end{theorem*}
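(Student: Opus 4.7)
The plan is to decompose the multiplication into four independently analyzed stages and show that each runs in $\mathcal{O}(n \log n)$ (or better). Writing $d = \deg \Psi_n = \varphi(n)/2 = \Theta(n)$, let $\{V_0, V_1, \dots, V_{d-1}\}$ denote the modified Chebyshev basis of $\Z[x]/(\Psi_n(x))$ introduced in Section \ref{sec:preliminaries}. The pipeline is: (i) convert $a(x)$ and $s(x)$ from the power basis to the Chebyshev basis; (ii) compute the product in the Chebyshev basis as a polynomial of degree up to $2d-2$ via the Discrete Cosine Transform; (iii) reduce the result modulo $\Psi_n(x)$ in the Chebyshev basis using the explicit reduction formulas from Section \ref{sec:reduction_formulas}; (iv) convert back to the power basis.

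Step (i) and step (iv) are handled by appealing to the external references mentioned in the introduction: the change-of-basis matrix between the power basis and the Chebyshev basis is a structured (triangular, Toeplitz-plus-Hankel type) matrix whose application can be executed in $\mathcal{O}(n \log n)$ operations. I would state this as a cited black box rather than reprove it. Step (ii) is the heart of the algorithm. Because the $V_j$ satisfy a product identity of the form $V_i V_j = V_{i+j} + V_{|i-j|}$, the coefficient sequence of the product of two Chebyshev-basis polynomials is obtained from the input sequences by a linear convolution together with a symmetric (``aliased'') convolution; both of these are exactly the operations that the DCT diagonalizes. Thus the product is computed by one forward DCT applied to each factor (extended to length $\Theta(n)$), a pointwise multiplication, and an inverse DCT, at a cost of $\mathcal{O}(n \log n)$.

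Step (iii) is where I expect the main obstacle to lie, and it is the step that relies most heavily on the structural work of Section \ref{sec:reduction_formulas}. After step (ii) we hold a Chebyshev-basis representation of a polynomial of degree at most $2d-2$, and we must rewrite it modulo $\Psi_n(x)$ using the basis $\{V_0, \dots, V_{d-1}\}$. The key point I would invoke from Section \ref{sec:reduction_formulas} is that the high-index basis elements $V_d, V_{d+1}, \dots, V_{2d-2}$ reduce to sparse (and explicitly described) $\Z$-linear combinations of the low-index $V_j$. Provided the number of nonzero terms in the reduction of each $V_k$ is bounded by an absolute constant (or more generally by $\mathcal{O}(1)$ per $k$, which is what the formulas for $n = p^s$ and $n = 2^r p^s$ in Section \ref{sec:reduction_formulas} deliver), the reduction amounts to at most $\mathcal{O}(d)$ scalar updates on the coefficient vector, yielding total cost $\mathcal{O}(n)$.

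Combining the four stages gives the complexity $\mathcal{O}(n \log n) + \mathcal{O}(n \log n) + \mathcal{O}(n) + \mathcal{O}(n \log n) = \mathcal{O}(n \log n)$, which proves the theorem. The only genuinely new ingredient compared to \cite{ahola2024fast} is the reduction step (iii), since the $\Psi_n$-specific reduction formulas change with the prime $p$ and the exponent $s$; everything else is inherited from the general DCT-based convolution machinery and from the Chebyshev/power-basis conversion results already in the literature.
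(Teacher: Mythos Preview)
Your four-stage pipeline matches the paper's structure exactly (Sections~5.1--5.3), and the paper's formal proof likewise defers steps (i) and (iv) to \cite{pan1998new} and \cite[Section~5]{ahola2024fast}. The only point that needs sharpening is your accounting in step~(iii). The reduction formulas of Lemmas~\ref{lemma:equivalence_p_to_s} and~\ref{lemma:2rps_reduction_relations} express each $V_{m+l}$ as a combination of $2k = p-1$ lower-index $V_j$'s, not an absolute-constant number, and they only cover the range $m \le j \le N$ (with $N = (p^s-1)/2$ or $N = 2^{r-2}p^s - 1$), not the full range $m \le j \le 2m-2$. The paper therefore runs the reduction in two passes: first it uses the cosine symmetries $V_j \equiv \pm V_{c - j}$ (one term each) to fold indices $N < j \le 2m-2$ down into $[0,N]$ at cost $O(m)$; only then does it apply the $(p-1)$-term formulas to the remaining $N - m + 1 = O(m/p)$ high indices, giving total work $(N-m)(p-1) < m = O(n)$. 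Your sketch arrives at the same $O(n)$ bound, but only if one either treats $p$ as fixed or replaces the ``$\mathcal{O}(1)$ terms per $V_k$'' claim with this two-pass count; as written, the claim that Section~\ref{sec:reduction_formulas} delivers $O(1)$-sparse reductions is not quite what those lemmas say.
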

Section \ref{sec:computational_analysis} defines a computational framework for studying how practically robust are the polynomials $\Psi_n(x)$ against known algebraic attacks against PLWE. We are particularly interested in the attacks that exploit the information about the roots of $\Psi_n(x)$. These attacks are constructed from the results of \cite{ELOS:2015:PWI, ELOS:2016:RCN, barbero2023cryptanalysis}, which were later generalized in \cite{BDM:2024:GARB}, laying the ground work to avoid special conditions over the distribution of the errors, along with applicability for higher degree extensions of finite fields. In Section \ref{sec:computational_analysis}, computational calculations are performed on a number of targeted samples, which are constructed to mimic \textit{cryptographically relevant} schemes, to obtain a ratio of how many of these polynomials are affected by the attacks. These numbers are also computed for the matching cyclotomic instances to compare the robustness of the two families. The results show that the maximal real polynomials do not suffer from any meaningful increased vulnerability against this algebraic approach.

\section{Preliminaries}
\label{sec:preliminaries}
This section will introduce the necessary background to follow the main theorems and lemmas of this paper. Our goal is to familiarize the reader with the construction of the maximal real subfields of cyclotomic fields and its ring of integers. Further, we introduce an alternative polynomial basis that will be the natural choice in the computations and derivations of the minimal polynomials of these fields. Lastly, we provide a series of lemmas as a preparation for the proof of the main Theorems \ref{thm:PLWE_RLWE_equivalence_p_to_s} and \ref{thm:PLWE_RLWE_equivalence_p_to_s_pow_2}.

\subsection{Maximal real subfields of cyclotomic fields}
In this work, we study the PLWE problem and general multiplication in the ring of integers of a \emph{maximal real subfield of a cyclotomic field}. 
\begin{definition}
    Let $n \geq 3$ and $\zeta_n = e^{2 \pi i / n}$ be a primitive $n$-th root of unity in $\C$. The \emph{maximal real subfield of a cyclotomic field} $\Q(\zeta_n)$ is defined as the intersection of the cyclotomic fields with the reals,
    \begin{align*}
        \Q(\zeta_n)^+ := \Q(\zeta_n) \cap \R.
    \end{align*}
\end{definition}

These number fields are generated by the real number $\psi_n = 2 \cos(2\pi / n)$ and the degree of the extension over $\Q$ is $\phi(n) / 2$. In addition, these fields are the fixed fields of the complex fields $\Q(\zeta_n)$ under the complex conjugation map. We will denote the minimal polynomial of $\psi_n$ over $\Q$ by $\Psi_n(x)$. In the literature, the polynomials $\Psi_n(x)$ are known as \emph{maximal real cyclotomic polynomials} or simply \emph{real cyclotomic polynomials}.
\begin{lemma}
The field $\Q(\zeta_n)^+$ is generated by $\psi_n = \zeta_n + \zeta_n^{-1} = 2 \cos(2 \pi / n)$, that is, $\Q(\zeta_n)^+ = \Q(\psi_n)$. Moreover, $[\Q(\zeta_n) : \Q(\zeta_n)^+] = 2$ and 
$$[\Q(\zeta_n)^+ : \Q] = \frac{\phi(n)}{2}.$$
\end{lemma}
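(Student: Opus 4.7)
The plan is to prove the three statements simultaneously by inserting $\Q(\psi_n)$ into the tower $\Q \subseteq \Q(\zeta_n)^+ \subseteq \Q(\zeta_n)$ and using that $\zeta_n$ is a root of a visible quadratic over $\Q(\psi_n)$.

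First, I would note the easy containment: since $\psi_n = \zeta_n + \zeta_n^{-1}$ lies in $\Q(\zeta_n)$ and is real, it belongs to $\Q(\zeta_n) \cap \R = \Q(\zeta_n)^+$, so $\Q(\psi_n) \subseteq \Q(\zeta_n)^+$. Next, I would observe that $\zeta_n$ satisfies
\[
x^2 - \psi_n x + 1 = 0,
\]
obtained directly from the relations $\zeta_n + \zeta_n^{-1} = \psi_n$ and $\zeta_n \cdot \zeta_n^{-1} = 1$. This polynomial lies in $\Q(\psi_n)[x]$, hence $[\Q(\zeta_n) : \Q(\psi_n)] \leq 2$. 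For $n \geq 3$, $\zeta_n = e^{2\pi i/n}$ is not real, so $\zeta_n \notin \Q(\zeta_n)^+$, and in particular $\zeta_n \notin \Q(\psi_n)$. Therefore $[\Q(\zeta_n) : \Q(\psi_n)] = 2$.

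Then I would apply the tower law to $\Q(\psi_n) \subseteq \Q(\zeta_n)^+ \subseteq \Q(\zeta_n)$. Since $[\Q(\zeta_n) : \Q(\psi_n)] = 2$, the intermediate field $\Q(\zeta_n)^+$ must equal either $\Q(\psi_n)$ or $\Q(\zeta_n)$. The latter is impossible because $\zeta_n \notin \R$, so we conclude $\Q(\zeta_n)^+ = \Q(\psi_n)$ and $[\Q(\zeta_n) : \Q(\zeta_n)^+] = 2$.

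Finally, the degree formula follows from the standard fact $[\Q(\zeta_n) : \Q] = \phi(n)$ and the multiplicativity of degrees in towers:
\[
[\Q(\zeta_n)^+ : \Q] = \frac{[\Q(\zeta_n) : \Q]}{[\Q(\zeta_n) : \Q(\zeta_n)^+]} = \frac{\phi(n)}{2}.
\]
There is no real obstacle here; the only subtlety is remembering the hypothesis $n \geq 3$, which is exactly what ensures $\zeta_n$ is non-real and makes the degree calculation come out to $2$ rather than collapse. Alternatively, one could phrase the argument via Galois theory, observing that complex conjugation restricts to a non-trivial order-$2$ automorphism of $\Q(\zeta_n)$ whose fixed field is $\Q(\zeta_n)^+$, but the elementary tower argument above is shorter and self-contained.
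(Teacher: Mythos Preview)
Your proof is correct and follows essentially the same approach as the paper: both hinge on the observation that $x^2 - \psi_n x + 1 = (x - \zeta_n)(x - \zeta_n^{-1})$ is the minimal polynomial of $\zeta_n$ over $\Q(\psi_n)$ (equivalently over $\Q(\zeta_n)^+$), which forces the degree to be $2$ and the intermediate field to collapse. Your write-up is simply a more explicit unpacking of the tower argument that the paper states in one sentence.
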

\begin{proof}
    The properties follow from $\Q(\zeta)^+$ being by definition the maximal real subextension and the fact that 
    $$x^2 - (\zeta_n + \zeta_n^{-1})x + 1 = (x -\zeta_n)(x - \zeta_n^{-1})$$ 
    is the minimal polynomial of $\zeta_n$ over $\Q(\zeta_n)^+$.
\end{proof}

Finally, for the PLWE setting we will need the following theorem, since the evaluation isomorphism $\Z[x] / (\Psi_n(x)) \to \Z[\psi_n]$ will help us construct the reduction formulas and dependencies in Section \ref{sec:reduction_formulas}.
\begin{theorem}
    The ring of integers of $\Q(\psi_n)$ is $\Z[\psi_n]$.
\end{theorem}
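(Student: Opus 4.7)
The plan is to leverage the classical identification $\O_{\Q(\zeta_n)} = \Z[\zeta_n]$ together with the degree-$2$ extension $\Q(\zeta_n)/\Q(\psi_n)$ established in the preceding lemma. The inclusion $\Z[\psi_n] \subseteq \O_{\Q(\psi_n)}$ is immediate because $\psi_n = \zeta_n + \zeta_n^{-1}$ is a sum of algebraic integers; the content lies in the reverse inclusion.

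The core step is to show that $\Z[\zeta_n]$, regarded as a $\Z[\psi_n]$-module, decomposes as
$$\Z[\zeta_n] \;=\; \Z[\psi_n] \;\oplus\; \Z[\psi_n]\,\zeta_n.$$
The minimal polynomial $x^2 - \psi_n x + 1$ of $\zeta_n$ over $\Q(\psi_n)$ identified in the preceding lemma yields the recurrence $\zeta_n^{k+1} = \psi_n \zeta_n^k - \zeta_n^{k-1}$, and a straightforward induction then shows that every power $\zeta_n^k$ takes the form $A_k(\psi_n) + B_k(\psi_n)\,\zeta_n$ with $A_k, B_k \in \Z[x]$, giving the sum. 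Directness follows from the fact that $\{1,\zeta_n\}$ is a $\Q(\psi_n)$-basis of $\Q(\zeta_n)$ (here one uses $\zeta_n \notin \R$ for $n \geq 3$), and that $\Z[\psi_n] \subseteq \Q(\psi_n)$.

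To conclude, take any $\alpha \in \O_{\Q(\psi_n)}$. Since $\alpha$ is integral over $\Z$ and lies in $\Q(\zeta_n)$, the classical result gives $\alpha \in \Z[\zeta_n]$, so by the decomposition above we may write $\alpha = a + b\zeta_n$ uniquely with $a,b \in \Z[\psi_n]$. But $\alpha \in \Q(\psi_n)$ forces the $\zeta_n$-coordinate of $\alpha$ in the $\Q(\psi_n)$-basis $\{1,\zeta_n\}$ of $\Q(\zeta_n)$ to vanish, so $b = 0$ and $\alpha = a \in \Z[\psi_n]$.

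The only nontrivial ingredient is the classical identification $\O_{\Q(\zeta_n)} = \Z[\zeta_n]$, which the paper invokes as standard background; beyond that, everything reduces to the quadratic relation satisfied by $\zeta_n$ over $\Q(\psi_n)$, making the argument short and self-contained.
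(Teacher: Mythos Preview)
Your argument is correct. The paper itself does not give a proof but merely cites \cite[Proposition 2.16]{washington2012introduction}; what you have written is essentially the content of that reference. Washington's proof also rests on the decomposition $\Z[\zeta_n] = \Z[\psi_n] \oplus \Z[\psi_n]\zeta_n$ (phrased slightly differently, via multiplying an arbitrary element of $\O_{\Q(\psi_n)}$ by a suitable power of $\zeta_n$ and reading off coefficients in the integral basis of $\Z[\zeta_n]$), so your route and the cited one coincide in spirit. There is nothing to correct.
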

\begin{proof}
    See \cite[Proposition 2.16]{washington2012introduction}.
\end{proof}

\subsection{Modified Chebyshev polynomials}
Chebyshev polynomials and the cosine function are connected. For the degree $n$ Chebyshev polynomial $T_n(x)$, the cosine of a multiple of an angle can be written as polynomial in $\cos(\theta)$ as $T_n(\cos(\theta)) = \cos(n \theta)$. Since our element of interest $\psi_n = 2\cos(2\pi / n)$ has an extra factor of 2, we introduce a modified family of Chebyshev polynomials that exhibit similar behavior with $2\cos(2\pi / n)$.
\begin{definition}
    Let $T_n(x)$ be the Chebyshev polynomial of the first kind of degree $n$. We define the family of polynomials $V_n(x)$ with $V_0(x)= 1$ and 
    \begin{align*}
        V_i(x) = 2T_i(x/2) \quad \text{for } i \geq 1
    \end{align*}
     to be the \emph{modified Chebyshev polynomial of degree} $i$.
    \label{def:modified_chebyshev_polynomials_V}
\end{definition}
From the definition, the family of modified Chebyshev polynomials satisfy the following property,
\begin{align*}
    V_n(2\cos(\theta)) = 2\cos(n\theta) \quad \text{for any } \theta  \text{ and } n \geq 1.
\end{align*}
For the following properties of the $V_n(x)$ and more, we refer the reader to \cite{loper2016resultants}. 
\begin{proposition}
\label{prop:properties_of_V}
For any $m,n \geq 1$ we have,
\begin{align*}
    V_m(2) = 2  \quad \text{and} \quad  V_n(V_m(x)) = V_{mn}(x).
\end{align*}
\end{proposition}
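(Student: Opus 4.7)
The plan is to deduce both identities from the defining cosine identity $V_k(2\cos\theta) = 2\cos(k\theta)$ stated just above the proposition, together with the standard fact $T_k(1) = 1$ for the classical Chebyshev polynomials.

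For the first claim $V_m(2) = 2$, I would simply substitute $\theta = 0$ into $V_m(2\cos\theta) = 2\cos(m\theta)$, which gives $V_m(2) = 2\cos 0 = 2$. Equivalently, using the definition $V_m(x) = 2T_m(x/2)$, this reduces to $T_m(1) = 1$, which is immediate from $T_m(\cos\theta) = \cos(m\theta)$ at $\theta=0$.

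For the composition identity, the idea is to verify $V_n(V_m(x)) = V_{mn}(x)$ on the infinite set $\{2\cos\theta : \theta \in [0,\pi]\} = [-2,2]$ and then invoke the fact that two polynomials agreeing on an infinite set must be equal. Writing $x = 2\cos\theta$, the cosine identity gives $V_m(x) = 2\cos(m\theta)$; applying $V_n$ to this we obtain
\begin{align*}
V_n(V_m(x)) = V_n(2\cos(m\theta)) = 2\cos(n \cdot m\theta) = V_{mn}(2\cos\theta) = V_{mn}(x).
\end{align*}
Hence the polynomial $V_n \circ V_m - V_{mn}$ vanishes on the whole interval $[-2,2]$ and must therefore be the zero polynomial.

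There is no real obstacle here; the only mild subtlety is remembering to lift a trigonometric identity holding on $[-2,2]$ to a polynomial identity on all of $\mathbb{C}$, which is immediate from the infinitude of $[-2,2]$. One could alternatively give a direct algebraic proof using $T_n(T_m(x)) = T_{mn}(x)$ and the rescaling $V_k(x) = 2T_k(x/2)$, which yields $V_n(V_m(x)) = 2T_n(T_m(x/2)) = 2T_{mn}(x/2) = V_{mn}(x)$; this avoids the analytic argument entirely and might be the cleaner presentation.
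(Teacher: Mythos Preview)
Your argument is correct. Both the evaluation at $\theta=0$ and the composition argument via $V_n(2\cos(m\theta)) = 2\cos(nm\theta)$ are valid, and your remark that agreement on $[-2,2]$ forces equality of polynomials is the right way to pass from the trigonometric identity to the polynomial identity. The alternative algebraic route through $T_n\circ T_m = T_{nm}$ is also fine and arguably cleaner.

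The only thing to note is that the paper does not actually prove this proposition: it simply refers the reader to \cite{loper2016resultants} for these and further properties of the $V_n$. So there is no ``paper's own proof'' to compare against; your write-up supplies what the paper outsources. Either of your two arguments would be a perfectly acceptable self-contained replacement for the citation.
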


\begin{proposition}
\label{prop:Vn_times_Vm}
For any $m,n \geq 1$ such that $n \neq m$, we have the relation 
$$V_n(x)V_m(x) = V_{m+n}(x) + V_{|m-n|}(x),$$
and if $n = m$, we have 
$$V_n(x)^2 = V_{2n}(x) + 2.$$
\end{proposition}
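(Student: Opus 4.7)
The plan is to prove both identities simultaneously by reducing them to the product-to-sum identity for cosines, and then invoke the fact that a polynomial identity that holds on an infinite set of inputs holds identically. Specifically, I will substitute $x = 2\cos\theta$ for an arbitrary real $\theta$, which is a valid input because $2\cos\theta$ ranges over the entire interval $[-2,2]$ as $\theta$ varies, an infinite set.

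First I would use the defining property recorded just after Definition \ref{def:modified_chebyshev_polynomials_V}, namely $V_k(2\cos\theta) = 2\cos(k\theta)$ for every $k \geq 1$. Multiplying the two evaluations gives
\begin{align*}
V_n(2\cos\theta)\,V_m(2\cos\theta) \;=\; 4\cos(n\theta)\cos(m\theta) \;=\; 2\cos((n+m)\theta) + 2\cos((n-m)\theta),
\end{align*}
by the standard product-to-sum formula $2\cos\alpha\cos\beta = \cos(\alpha+\beta) + \cos(\alpha-\beta)$ together with the evenness of cosine. For the case $n \neq m$, both $n+m$ and $|n-m|$ are at least $1$, so each of the two cosines on the right equals $V_{n+m}(2\cos\theta)$ and $V_{|n-m|}(2\cos\theta)$ respectively, giving the first formula on the infinite set $[-2,2]$ and hence as a polynomial identity.

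For the case $n = m$, the same computation produces $2\cos(2n\theta) + 2\cos(0)$; here the first summand becomes $V_{2n}(2\cos\theta)$ by the property for $k = 2n \geq 1$, while the second summand equals $2$. Since $V_0(x) = 1$ by definition, one cannot replace the constant $2$ by $V_0(2\cos\theta)$, and this small discrepancy is exactly why the proposition states $V_n(x)^2 = V_{2n}(x) + 2$ rather than $V_{2n}(x) + V_0(x)$. This is the only subtle point in the argument; otherwise the derivation is a direct translation of a trigonometric identity into polynomial form. After establishing the identity pointwise on $\{2\cos\theta : \theta \in \mathbb{R}\} = [-2,2]$, the two sides agree as polynomials in $x$ because their difference has infinitely many roots.
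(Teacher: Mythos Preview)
Your proof is correct. The substitution $x = 2\cos\theta$, the product-to-sum identity, and the passage from an identity on the infinite set $[-2,2]$ to a polynomial identity are all valid, and you handle the $n=m$ case carefully by noting that $V_0(x)=1$ rather than $2$.

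As for comparison with the paper: the paper does not actually prove this proposition. It is stated without proof, with the surrounding text referring the reader to \cite{loper2016resultants} for these properties of the $V_n$. So your argument supplies a short self-contained proof where the paper only cites an external source. The trigonometric route you take is the natural one and is essentially how such Chebyshev-type product formulas are usually established; there is nothing exotic here, but it is a genuine contribution relative to what the paper itself contains.
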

Alternatively, the polynomials $V_n(x)$ can be defined recursively by
$$V_n(x) = x V_{n-1}(x) - V_{n-2}(x) \mbox{ \; for \;} n \geq 3,$$
with the initializing sequence
$$V_0(x)=1 ,\; V_1(x) = x , \; V_2(x) = x^2 - 2.$$
Note that this recursive definition coincides with the Lucas sequence of the second kind $V_n(P,Q)$ when $P = x$ and $Q = 1$.

In addition, it is easy to see that for any $n$ and $m = \phi(n) / 2$, the set
\begin{align}
\label{eq:V_basis}
    V := \{V_0(x), V_1(x), \hdots, V_{m-1}(x)\}
\end{align}
is a basis for $\mathfrak{O} = \Z[x] / (\Psi_n(x))$ as all the polynomials $V_i(x)$ are monic with integer coefficients and degree $i$.

\subsection{Matrix norms}
Our main tool in proving the equivalence of the RLWE and PLWE problems will be the \emph{condition number} of the Vandermonde matrix that relates the two problems. We provide the necessary definitions and relations below.
\begin{definition}
Let $A\in\mathrm{GL}_n(\mathbb{C})$ be an invertible square matrix with complex entries and $|| \cdot ||$ a matrix norm. The \emph{condition number} of $A$ with respect to the matrix norm $|| \cdot ||$ is defined as
$$
\kappa_{|| \cdot ||}(A)=||A|| \cdot ||A^{-1}||.
$$
\end{definition}
Note that the condition number is sub-multiplicative whenever the corresponding matrix norm is sub-multiplicative.

We will be working with condition numbers defined with respect to the sub-multiplicative \emph{Frobenius norm},
$$|| A ||_F := \sqrt{\sum_{i,j} |a_{i,j}|^2},$$
and the induced 2-norm known as the \emph{spectral norm},
$$|| A||_2 := \sup_{||x||_2 \neq 0} \frac{||Ax||_2}{||x||_2}.$$

In these cases, we denote the condition number of $A$ in the Frobenius norm and the spectral norm by $\kappa_{F}(A)$ and $\kappa_2(A)$, respectively.
It is well known that the norms satisfy the relation
\begin{equation}
    ||A||_2 = \sigma_{\max}(A) \; \leq  \; ||A||_F = \sqrt{\sum_{i} \sigma_i(A)^2} \;  \leq \; \sqrt{n} ||A||_2,
\end{equation}
where $\sigma_i(A)$ are the singular values of $A$ and $\sigma_{\max}(A)$ represent the maximal of these. From the definition of the condition numbers it follows that 
\begin{equation}
\label{eq:cond_num_ineq}
    \kappa_2(A) \leq \kappa_F(A) \leq n \cdot \kappa_2(A).
\end{equation}

\subsection{Cosine matrices}
Next, we will present two cosine matrices that will play an essential role in the proof of the PLWE--RLWE equivalence in Section \ref{sec:plwe_rlwe_equivalence}. The two cases that we cover are a prime power conductor $n = p^s$ and its product with a power of two $n = 2^r p^s$. We will show that the condition numbers of the cosine matrices are polynomially bounded in $n$.
\subsubsection{Case \texorpdfstring{$n = p^s$}{n = p\^s}}
Fix an odd prime number $p$, and consider the prime power $n=p^s$ with $s\geq 1$. Let $N = (p^s - 1) / 2$ and define a grid of points $x_i = 2\cos (2\pi i / n)$ on the real line. With this setting, we introduce a \emph{cosine matrix} for $n=p^s$.
\begin{definition}
\label{def:cosine_matrix_p_to_s}
The cosine matrix $\mathbf{C}_{N+1}$ is a square matrix indexed by $i$ and $j$, where $i=1,\dots, N+1$ , $j=0,1, \dots , N$. We set  $\left(\mathbf{C}_{N+1}\right)_{i,j} = c_{ij} = V_j(x_i)$ for $i \leq N$, and for the last row $i = N+1$, we set $c_{ij} = 1$. This defines the matrix
\begin{align*}
    {\mathbf{C}_{N+1}} &=
    \begin{bmatrix}
        1 & 2\cos(2\pi/ p^s) & 2 \cos(2 \pi 2 / p^s) & \hdots & 2\cos(2\pi N /p^s) \\
        \vdots & \vdots & \vdots & \hdots  & \vdots \\
        1 & 2\cos(2\pi i/ p^s) & 2 \cos(2 \pi 2 i / p^s) & \hdots & 2\cos(2\pi N i /p^s) \\
        \vdots & \vdots & \vdots & \ddots & \vdots \\
        1  & 1 & \hdots & \hdots &  1
    \end{bmatrix}.
\end{align*}
\end{definition}
To prove that the condition number of the matrix $\mathbf{C}_{N+1}$ is polynomially bounded in $N = (p^s - 1) / 2$, we will need the following two lemmas.
\begin{lemma}
\label{lemma:orthogonal}
    For any $1 \leq \sigma \leq p^s -1$, we have the following sum
    \begin{equation} 
    1 + \sum_{j=1}^{N} 2\cos \left( \frac{2\pi \sigma}{p^s} j\right) = 0.
    \end{equation}
\end{lemma}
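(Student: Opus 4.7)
The plan is to reduce the claim to the well-known vanishing of the geometric sum of $p^s$-th roots of unity, and then to exploit the evenness of cosine together with the fact that $p^s$ is odd to fold the sum in half.

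First, I would set $\omega = e^{2\pi i \sigma / p^s}$. The hypothesis $1 \leq \sigma \leq p^s - 1$ guarantees $\omega \neq 1$, while $\omega^{p^s} = e^{2\pi i \sigma} = 1$. The geometric series formula therefore yields
$$\sum_{j=0}^{p^s - 1} \omega^j = \frac{\omega^{p^s} - 1}{\omega - 1} = 0.$$
Taking real parts gives the identity
$$\sum_{j=0}^{p^s - 1} \cos\left(\frac{2\pi \sigma}{p^s} j\right) = 0.$$

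Next, I would use the symmetry $\cos(2\pi \sigma (p^s - j)/p^s) = \cos(2\pi \sigma - 2\pi \sigma j/p^s) = \cos(2\pi \sigma j/p^s)$, valid because $\sigma \in \Z$ and cosine is even and $2\pi$-periodic. Since $p^s = 2N+1$ is odd, the indices $j = 1, \dots, p^s - 1$ split into exactly $N$ disjoint pairs $\{j,\, p^s - j\}$ for $j = 1, \dots, N$, with no fixed point. Each pair contributes $2\cos(2\pi \sigma j/p^s)$, so the sum over $j = 1, \dots, p^s - 1$ equals $\sum_{j=1}^{N} 2\cos(2\pi \sigma j/p^s)$. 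Combining this with the $j = 0$ term, which is $1$, gives the stated identity.

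There is no real obstacle here; the argument is just the combination of a geometric series vanishing at a nontrivial root of unity with a clean pairing of indices. The one place where care is needed is verifying that the pairing has no fixed point, which uses exactly the odd parity of $p^s$; if one attempted the same argument for an even modulus, the index $j = p^s/2$ would need separate treatment.
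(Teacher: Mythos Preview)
Your proof is correct and takes essentially the same approach as the paper: both use the symmetry $\cos(2\pi\sigma(p^s-j)/p^s)=\cos(2\pi\sigma j/p^s)$ to relate the half-sum to the full sum $\sum_{j=0}^{p^s-1}\cos(2\pi\sigma j/p^s)$, and evaluate the latter as the real part of a vanishing geometric series of roots of unity. The only difference is the order of the two steps (you sum first and then fold, the paper unfolds first and then sums), which is immaterial.
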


\begin{proof}
Let $\zeta = e^{2\pi \sigma / p^s}$ be an $n$-th root of unity in $\C$. Then applying the identity 
$$\cos\left( \frac{2\pi \sigma j}{n} \right) = \cos 
 \left(\frac{2\pi \sigma (n - j)}{n} \right)$$
to the sum gives us
\begin{align*}
    1 + \sum_{j=1}^{N} 2\cos \left( \frac{2\pi \sigma j}{p^s} \right) &= 1 + \sum\limits_{j=1}^{2N} \cos \left( \frac{2\pi \sigma j}{p^s} \right) \\
    & =  \Re\left( 1 + \zeta + \zeta^2 + \dots+ \zeta^{2N} \right) \\
    & =  \Re \left(\frac{\zeta^{p^s}-1}{\zeta - 1} \right) \\
    & = 0.
\end{align*}

\end{proof}

\begin{lemma}
\label{lemma:symmetric_dct}
Let $p$ be a prime number, $N =(p^s -1) / 2$, and $\mathbf{C}_{N+1}$ the cosine matrix of dimension $N+1$. Then the Gram matrix of the rows of $\mathbf{C}_{N+1}$ is
\begin{equation}
% \label{cosine transform}
\mathbf{C}_{N+1} \mathbf{C}_{N+1}^T = \begin{pmatrix}
    2N & -1 & \dots & -1 & 0\\
    -1 & 2N & \dots & -1 & 0 \\
    \vdots & \vdots & \ddots & \dots& \vdots \\
    -1 & -1 & \dots & 2N & 0 \\
    0 & 0 & \dots& 0 & N+1
\end{pmatrix}.
\end{equation}
\end{lemma}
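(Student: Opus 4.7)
The plan is to compute the $(i,k)$-entry of $\mathbf{C}_{N+1}\mathbf{C}_{N+1}^T$ directly, split into four cases depending on whether $i,k$ equal $N+1$, and reduce each computation to an application of Lemma \ref{lemma:orthogonal}. First, using $V_0(x)=1$ and the defining identity $V_j(2\cos\theta) = 2\cos(j\theta)$ for $j\geq 1$, I would record that the entries of $\mathbf{C}_{N+1}$ are
\begin{align*}
    c_{i,0} = 1 \quad \text{for all } i, \qquad c_{i,j} = 2\cos(2\pi i j / p^s) \quad \text{for } 1 \leq i \leq N,\; j \geq 1,
\end{align*}
together with $c_{N+1,j} = 1$ for all $j$.

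Next, for $1 \leq i,k \leq N$ I would apply the product-to-sum formula $4\cos A\cos B = 2\cos(A+B) + 2\cos(A-B)$ to write
\begin{align*}
    (\mathbf{C}_{N+1}\mathbf{C}_{N+1}^T)_{i,k}
    = 1 + \sum_{j=1}^{N}\!\left[2\cos\!\left(\tfrac{2\pi(i+k)j}{p^s}\right) + 2\cos\!\left(\tfrac{2\pi(i-k)j}{p^s}\right)\right].
\end{align*}
For $i = k$, the second cosine is identically $1$, contributing $2N$, while the first sum corresponds to $\sigma = 2i$, which satisfies $1 \leq 2i \leq 2N = p^s-1$, so Lemma \ref{lemma:orthogonal} makes $1$ plus that sum vanish, leaving the diagonal entry equal to $2N$. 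For $i \neq k$, both $\sigma = i+k$ and $\sigma = |i-k|$ lie in the range $1 \leq \sigma \leq p^s - 1$ (here I would note $2 \leq i+k \leq 2N$ and $1 \leq |i-k| \leq N-1$), so rewriting the expression as $-1 + \bigl[1+\sum 2\cos(\tfrac{2\pi(i+k)j}{p^s})\bigr] + \bigl[1+\sum 2\cos(\tfrac{2\pi|i-k|j}{p^s})\bigr]$ and applying Lemma \ref{lemma:orthogonal} twice gives $-1$.

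Finally, I would handle the last row and column. For $1 \leq k \leq N$, the entry $(\mathbf{C}_{N+1}\mathbf{C}_{N+1}^T)_{N+1,k}$ is precisely the sum $1 + \sum_{j=1}^{N} 2\cos(2\pi k j / p^s)$ with $\sigma = k \in \{1,\dots,N\}$, which Lemma \ref{lemma:orthogonal} evaluates to $0$; by symmetry of the Gram matrix the same holds for the $(i,N+1)$ entries. The bottom-right entry is $\sum_{j=0}^{N} 1 = N+1$, completing the verification. The main obstacle is simply bookkeeping: checking that in each off-diagonal case the arguments $i\pm k$ fall inside the range $\{1,\dots,p^s-1\}$ where Lemma \ref{lemma:orthogonal} is applicable, which is where the assumption $i,k \leq N = (p^s-1)/2$ is used crucially.
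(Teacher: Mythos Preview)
Your proof is correct and follows essentially the same approach as the paper: split into cases according to whether the row/column index is $N+1$, apply the product-to-sum identity, and invoke Lemma~\ref{lemma:orthogonal}. If anything, you are slightly more careful than the paper in explicitly checking that the parameters $i+k$ and $|i-k|$ fall in the range $\{1,\dots,p^s-1\}$ required by Lemma~\ref{lemma:orthogonal}.
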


\begin{proof}
First, let us consider row and column indices $u,v \leq N$. For these, we simplify and use Lemma \ref{lemma:orthogonal} to get
\begin{align*}
    (\mathbf{C}_{N+1} \mathbf{C}_{N+1}^T)_{uv} &=  1 + 4\sum_{j=1}^{N} \cos\left(\frac{2\pi u}{p^s} j\right) \cos\left(\frac{2\pi v}{p^s} j\right)   \\
    &=1 + 2\sum_{j=1}^{N} \cos\left(\frac{ 2\pi j (u+v)}{p^s}\right ) + \cos\left(\frac{ 2\pi j(u-v)}{p^s}\right ) \\
    &= 1 + \sum_{j=1}^{N} 2\cos\left(\frac{ 2\pi j (u+v)}{p^s}\right ) + \sum_{j=1}^{N} 2\cos\left(\frac{ 2\pi j (u-v)}{p^s}\right )\\
    &= \sum_{j=1}^{N} 2\cos\left(\frac{ 2\pi j (u-v)}{p^s}\right ).
\end{align*}

Now, if $u \neq v$ by the sum simplifies to 
$(\mathbf{C}_{N+1} \mathbf{C}_{N+1}^T)_{uv} = -1$. Otherwise, $(\mathbf{C}_{k+1} \mathbf{C}_{N+1}^T)_{uv} = 2N = p^s - 1.$ Finally for the last column, if  $u=N+1$ and $v \leq N$, then again by Lemma \ref{lemma:orthogonal} $(\mathbf{C}_{N+1} \mathbf{C}_{N+1}^T)_{uv} = 0$. If $u = v = N+1$, we have $(\mathbf{C}_{N+1} \mathbf{C}_{N+1}^T)_{uv} = N + 1$.
\end{proof}

Now, we can prove the polynomial bound for the condition number.
\begin{lemma}
\label{lemma:cond_number_of_C_N+1}
    The condition number of the cosine matrix  $\mathbf{C}_{N+1}$ satisfies 
    \begin{align*}
        \kappa_F(\mathbf{C}_{N+1}) < \sqrt{2}(N+1).
    \end{align*}
\end{lemma}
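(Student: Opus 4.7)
The plan is to exploit the Frobenius-norm identity
\[
\|A\|_F^2 = \operatorname{tr}(AA^T), \qquad \|A^{-1}\|_F^2 = \operatorname{tr}\bigl((AA^T)^{-1}\bigr),
\]
which is exactly the setup where Lemma \ref{lemma:symmetric_dct} becomes maximally useful: we already have an explicit description of the Gram matrix $G := \mathbf{C}_{N+1}\mathbf{C}_{N+1}^T$.

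First I would read off $\|\mathbf{C}_{N+1}\|_F^2$ as the trace of $G$. From the diagonal entries given in Lemma \ref{lemma:symmetric_dct}, this is immediately $N \cdot 2N + (N+1) = 2N^2 + N + 1$.

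Next I would compute $\operatorname{tr}(G^{-1})$ by a spectral analysis of $G$. The point is that $G$ is block diagonal: the last row and column contribute only a single eigenvalue $N+1$, while the top-left $N \times N$ block $M$ has $2N$ on the diagonal and $-1$ off the diagonal. Writing $M = (2N+1)I - J$, where $J$ is the $N \times N$ all-ones matrix with eigenvalues $N$ (multiplicity one) and $0$ (multiplicity $N-1$), one sees that $M$ has eigenvalues $N+1$ (once) and $2N+1$ ($N-1$ times). Thus $G$ has eigenvalue $N+1$ with multiplicity $2$ and $2N+1$ with multiplicity $N-1$, giving
\[
\operatorname{tr}(G^{-1}) = \frac{2}{N+1} + \frac{N-1}{2N+1}.
\]

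Finally I would multiply the two quantities and verify that
\[
\kappa_F(\mathbf{C}_{N+1})^2 \;=\; (2N^2 + N + 1)\left(\frac{2}{N+1} + \frac{N-1}{2N+1}\right) \;<\; 2(N+1)^2,
\]
which reduces to an elementary polynomial inequality in $N$ that holds for all $N \geq 1$. There is no real obstacle here; the only mildly delicate step is the eigenvalue computation for $M$, and even that is immediate from the rank-one structure $M = (2N+1)I - J$. The final inequality is loose enough that coarse bounds such as $(N-1)/(2N+1) < 1$ and $4/(N+1) \leq 2$ already suffice to close it.
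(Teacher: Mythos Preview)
Your proof is correct, and it shares the key ingredient with the paper's argument: both compute the full spectrum of $G=\mathbf{C}_{N+1}\mathbf{C}_{N+1}^T$ from Lemma~\ref{lemma:symmetric_dct}, obtaining eigenvalue $N+1$ with multiplicity $2$ and $2N+1$ with multiplicity $N-1$. The difference lies only in how the Frobenius bound is extracted from that spectrum. The paper first bounds the \emph{spectral} condition number $\kappa_2(\mathbf{C}_{N+1})=\sqrt{(2N+1)/(N+1)}<\sqrt{2}$ and then invokes the general inequality $\kappa_F\le (N+1)\,\kappa_2$ from~\eqref{eq:cond_num_ineq}. You instead compute $\|\mathbf{C}_{N+1}\|_F^2=\operatorname{tr} G$ and $\|\mathbf{C}_{N+1}^{-1}\|_F^2=\operatorname{tr} G^{-1}$ directly and verify the resulting polynomial inequality. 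Your route is marginally more self-contained (no appeal to the norm comparison), at the cost of a short algebraic check at the end; the paper's route is cleaner on the algebra but needs the auxiliary inequality~\eqref{eq:cond_num_ineq}. Neither gains anything substantive over the other.
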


\begin{proof}
From Lemma \ref{lemma:symmetric_dct}, we see that the eigenvalues of $\mathbf{C}_{N+1}\mathbf{C}^T_{N+1}$ are twice $\lambda_{\min} =N+1$ and $N-1$ times $\lambda_{\max} =2N+1$. This follows from analyzing the null-space of the matrix $\mathbf{C}_{N+1}\mathbf{C}^T_{N+1} - \lambda_* I$, where $* \in \{\max, \min\}$. In the spectral norm, we have
$$||\mathbf{C}_{N+1}||_2^2  = \lambda_{\max} = 2N + 1$$
    and
    $$||\mathbf{C}_{N+1}^{-1}||_2^2 = \frac{1}{\lambda_{\min}} = \frac{1}{N+1}.$$
    Then the condition number is given by their product, and $$\kappa_2(\mathbf{C}_{N+1}) = \sqrt{ \frac{2N + 1}{N + 1} } < \sqrt{2}.$$
    Lastly, by the equivalence of norms \eqref{eq:cond_num_ineq} we conclude that
    $$\kappa_F(\mathbf{C}_{N+1}) < \sqrt{2}(N+1).$$
\end{proof}

\subsubsection{Case \texorpdfstring{$n = 2^r p^s$}{n = 2\^r p\^s}}

Let $p$ be an odd prime number, and $n = 2^r p^s$ a composite number with $r\geq 2$, $s \geq 1$. Fix $N = 2^{r-2}p^{s} - 1$, and consider the grid of points 
$$x_i = 2\cos\left(\frac{2\pi (2i + 1)}{n} \right).$$ 
For this $n$, we define the related cosine matrix as $\left(\mathbf{C}_{N+1}\right)_{i,j} =  c_{ij}$, where $i,j =0,1,\dots, N$ and $c_{ij} = V_j(x_i)$. Writing this in matrix form yields
\begin{align*}
    {\mathbf{C}_{N+1}} &=
    \begin{bmatrix}
        1 & 2\cos\left( \frac{2\pi}{n} \right) & 2 \cos \left(\frac{2 \pi 2}{n}\right) & \hdots & 2\cos\left(\frac{2\pi N}{n}\right) \\
        \vdots & \vdots & \vdots & \hdots  & \vdots \\
        1 & 2\cos\left(\frac{2\pi (2i+1)}{n}\right) & 2 \cos\left(\frac{2 \pi 2 (2i +1)}{n}\right) & \hdots & 2\cos\left(\frac{2\pi \sigma (2i +1)N}{n}\right) \\
        \vdots & \vdots & \vdots & \ddots & \vdots \\
        1 & 2\cos\left(\frac{2\pi(2N+1)}{n}\right) & \hdots & \hdots & 2\cos\left(\frac{2\pi (2N+1)N}{n}\right)
    \end{bmatrix}.
\end{align*}

We apply the same strategy as above to derive a polynomial bound on the Frobenius norm of $\mathbf{C}_{N+1}$. We start with two lemmas.
\begin{lemma}
\label{lemma:orthogonal2}
    For any $1 \leq j \leq 2^{r-1}p^s-1$, we have the result
    \begin{equation} 
    \sum_{i=0}^{N} 2\cos \left( \frac{2\pi (2i + 1)}{n} j\right) = 0.
    \end{equation}
\end{lemma}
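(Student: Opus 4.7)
The plan is to follow the strategy of Lemma \ref{lemma:orthogonal}: convert the cosine sum into a pair of geometric series of complex exponentials, verify that the common ratio is a non-trivial root of unity, and then apply the closed-form summation. Reindexing the sum by $k$ and letting $\zeta = e^{2\pi i j/n}$ (with $i$ the imaginary unit), the identity $2\cos(\theta) = e^{i\theta} + e^{-i\theta}$ rewrites the target as
\begin{align*}
    \sum_{k=0}^N 2\cos\!\left(\frac{2\pi(2k+1) j}{n}\right) = \zeta \sum_{k=0}^N (\zeta^2)^k + \zeta^{-1} \sum_{k=0}^N (\zeta^{-2})^k.
\end{align*}

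The non-triviality of $\zeta^2$ is exactly where the hypothesis on $j$ enters: $\zeta^2 = 1$ would force $n \mid 2j$, i.e., $j$ a multiple of $n/2 = 2^{r-1}p^s$, which is ruled out by $1 \leq j \leq 2^{r-1}p^s - 1$. So the geometric sum formula applies. The crucial computation is then the numerator exponent: since $N + 1 = 2^{r-2}p^s = n/4$, one gets $\zeta^{2(N+1)} = \zeta^{n/2} = e^{\pi i j} = (-1)^j$, so the factor $(-1)^j - 1$ appears in the numerator of both pieces.

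From here the argument splits by the parity of $j$. If $j$ is even, $(-1)^j - 1 = 0$ kills both geometric sums simultaneously and the claim is immediate. The only part requiring actual arithmetic is the odd case, where the two pieces reduce to $\frac{-2\zeta}{\zeta^2 - 1}$ and $\frac{-2\zeta^{-1}}{\zeta^{-2} - 1}$; after clearing denominators in the second expression by multiplying top and bottom by $\zeta^2$, one sees that the two terms are negatives of each other and cancel. This small bookkeeping in the odd case is essentially the only obstacle, and it relies on no hypotheses beyond the $\zeta^2 \neq 1$ already established, so no new ingredients are required relative to Lemma \ref{lemma:orthogonal}—only the parity dichotomy introduced by the odd exponents $2k+1$.
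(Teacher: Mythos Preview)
Your argument is correct. Both proofs reduce to a geometric series in $\zeta^2$ after checking that the hypothesis $1 \le j \le n/2 - 1$ forces $\zeta^2 \neq 1$, so the core idea is the same. The tactical difference is that the paper, instead of splitting $2\cos$ into $\zeta^{\pm(2k+1)}$, uses the symmetry $\cos\bigl(\tfrac{2\pi j(2k+1)}{n}\bigr) = \cos\bigl(\tfrac{2\pi j(n-2k-1)}{n}\bigr)$ to extend the summation range so that $2k+1$ runs over all odd residues $1,3,\ldots,n-1$; this gives a single geometric sum $\Re\bigl(\zeta + \zeta^3 + \cdots + \zeta^{n-1}\bigr) = \Re\bigl(\zeta\,\tfrac{\zeta^{n}-1}{\zeta^2-1}\bigr)$ whose numerator vanishes directly via $\zeta^n = 1$, with no parity dichotomy. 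Your decomposition is equally valid but trades that symmetry trick for the even/odd case split on $j$, which is a small extra bookkeeping step; the paper's route is marginally slicker, yours is perhaps more transparent about where each half of the cosine goes.
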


\begin{proof}
Let $\zeta = e^{\frac{2\pi j}{n}}$ be primitive $n$-th root of unity, and using the fact that 
$$\cos \left( \frac{2\pi j (2i +1)}{n}\right) = \cos\left( \frac{2\pi j (n - 2i - 1)}{n} \right)$$
the sum simplifies to
\begin{align*}
    \sum_{i=0}^{N} 2\cos \left( \frac{2\pi (2i +1) j}{n} \right) &=  \sum\limits_{i=0}^{2N+1} \cos \left( \frac{2\pi (2i+1) j}{n} \right) \\
    & =  \Re\left( \zeta + \zeta^3 + \dots+ \zeta^{n-1} \right)\\
    & =  \Re \left(\zeta \frac{\zeta^{n}-1}{\zeta^2 - 1} \right)  \\
    % & =  \Re \left(\zeta \frac{1-1}{\zeta^2 - 1} \right)  \\
    & =  0.
\end{align*}
\end{proof}

\begin{lemma}
\label{lemma:dct2rps}
Let $p$ be a prime number, and $\mathbf{C}_{N+1}$ the cosine matrix of dimension $N+1 = 2^{r-2}p^s$. Then the columns of $\mathbf{C}_{N+1}$ are orthogonal and
\begin{equation}
% \label{cosine transform 2rps}
\mathbf{C}_{N+1}^T \mathbf{C}_{N+1} = \begin{pmatrix}
    N +1& 0 & \dots & 0 & 0\\
    0 & 2(N+1) & \dots & 0 & 0 \\
    \vdots & \vdots & \ddots & \dots& \vdots \\
    0 & 0 & \dots & 2(N +1) & 0 \\
    0 & 0 & \dots& 0 & 2(N+1)
\end{pmatrix}.
\end{equation}
\end{lemma}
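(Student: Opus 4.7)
The plan is to compute the $(u,v)$-entry of $\mathbf{C}_{N+1}^T \mathbf{C}_{N+1}$ directly as the inner product of columns $u$ and $v$ of $\mathbf{C}_{N+1}$, and split into three cases depending on whether $u$ and $v$ are zero or positive. Writing $c_{ij}=V_j(x_i)$ with $V_0\equiv 1$ and $V_j(x_i)=2\cos(2\pi j(2i+1)/n)$ for $j\geq 1$, we have
\begin{align*}
(\mathbf{C}_{N+1}^T \mathbf{C}_{N+1})_{uv} = \sum_{i=0}^{N} V_u(x_i)\, V_v(x_i).
\end{align*}

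First I would dispose of the easy cases. When $u=v=0$ the sum is $\sum_{i=0}^N 1 = N+1$, giving the top-left entry. When exactly one of $u,v$ is zero, say $u=0$ and $1\leq v\leq N$, the sum equals $\sum_{i=0}^N 2\cos(2\pi v(2i+1)/n)$, and since $v\leq N=2^{r-2}p^s-1 < 2^{r-1}p^s-1$, Lemma \ref{lemma:orthogonal2} gives $0$. This handles the first row and column off the diagonal.

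For the main case $1\leq u,v\leq N$, the plan is to apply the product-to-sum identity $2\cos A\cos B=\cos(A+B)+\cos(A-B)$ to rewrite
\begin{align*}
\sum_{i=0}^{N} 4\cos\!\Bigl(\tfrac{2\pi u(2i+1)}{n}\Bigr)\cos\!\Bigl(\tfrac{2\pi v(2i+1)}{n}\Bigr) = \sum_{i=0}^{N} 2\cos\!\Bigl(\tfrac{2\pi(u+v)(2i+1)}{n}\Bigr) + \sum_{i=0}^{N} 2\cos\!\Bigl(\tfrac{2\pi(u-v)(2i+1)}{n}\Bigr).
\end{align*}
When $u=v$, the second sum collapses to $\sum_{i=0}^N 2 = 2(N+1)$, and the first sum vanishes by Lemma \ref{lemma:orthogonal2} since $2\leq u+v\leq 2N=2^{r-1}p^s-2$. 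When $u\neq v$, both $u+v$ and $|u-v|$ lie in $\{1,\ldots,2^{r-1}p^s-1\}$, so both sums vanish by the same lemma.

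The only subtle point, and the only real obstacle, is checking that the arguments $u+v$ and $|u-v|$ always fall within the range $1\leq \sigma\leq 2^{r-1}p^s-1$ where Lemma \ref{lemma:orthogonal2} applies; the bound $N=2^{r-2}p^s-1$ is chosen precisely so that $u+v\leq 2N<2^{r-1}p^s-1$, so this verification is immediate once stated. Assembling the three cases yields the claimed diagonal form with entries $N+1, 2(N+1), \ldots, 2(N+1)$, and orthogonality of the columns follows.
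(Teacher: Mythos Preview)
Your proof is correct and follows essentially the same approach as the paper: compute the column inner products, use the product-to-sum identity $2\cos A\cos B=\cos(A+B)+\cos(A-B)$, and invoke Lemma~\ref{lemma:orthogonal2} to kill the resulting sums. Your version is slightly more careful in explicitly verifying that $u+v$ and $|u-v|$ fall in the range where Lemma~\ref{lemma:orthogonal2} applies, which the paper leaves implicit.
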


\begin{proof}
First, consider $u,v > 1$. For these we have 
\begin{align*}
    (\mathbf{C}_{N+1}^T \mathbf{C}_{N+1})_{uv} &= 4\sum_{i=0}^{N} \cos\left(\frac{2\pi (2i +1)}{n} u\right) \cos\left(\frac{2\pi (2i +1)}{n} v\right)   \\
    &= 2\sum_{i=0}^{N} \cos\left(\frac{ 2\pi(2i +1)(u+v)}{n}\right ) + \cos\left(\frac{ 2\pi (2i+1)(u-v)}{n}\right ) \\
    % &= \sum_{i=0}^{N} 2\cos\left(\frac{ 2\pi (2i + 1) (u+v)}{n}\right ) + \sum_{i=0}^{N} 2\cos\left(\frac{ 2\pi (2i + 1) (u-v)}{n}\right )\\
    &= \sum_{i=0}^{N} 2\cos\left(\frac{ 2\pi (2i + 1) (u-v)}{n}\right ).
\end{align*}

Now, if $u \neq v$ then again by Lemma \ref{lemma:orthogonal2},
$(\mathbf{C}_{N+1}^T \mathbf{C}_{N+1})_{uv} = 0$, otherwise $(\mathbf{C}_{N+1}^T \mathbf{C}_{N+1})_{uv} = 2N +2$. Finally, if $u=1$ and $v > 1$, then by Lemma \ref{lemma:orthogonal2} $(\mathbf{C}_{N+1}^T \mathbf{C}_{N+1})_{uv} = 0$, and for $u = v = 1$ we have $(\mathbf{C}_{N+1}^T \mathbf{C}_{N+1})_{uv} = N + 1$.
\end{proof}

The orthogonality result of Lemma \ref{lemma:dct2rps} allows us to easily prove the following bound.
\begin{lemma}
\label{lemma:cond_number_2rps}
    The condition number of the cosine matrix  $\mathbf{C}_{N+1}$ satisfies 
    \begin{align*}
        \kappa_F(\mathbf{C}_{N+1}) < \sqrt{2}(N+1).
    \end{align*}
\end{lemma}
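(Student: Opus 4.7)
The plan is to mirror the proof of Lemma \ref{lemma:cond_number_of_C_N+1} almost verbatim, since the hard combinatorial work has already been done in Lemma \ref{lemma:dct2rps}. The orthogonality of the columns given there reduces the computation of the singular values to reading off the diagonal of a diagonal matrix.

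First, I would invoke Lemma \ref{lemma:dct2rps} to conclude that $\mathbf{C}_{N+1}^T \mathbf{C}_{N+1}$ is diagonal with entries $N+1$ in the top-left corner and $2(N+1)$ in every other diagonal position. Since the eigenvalues of $\mathbf{C}_{N+1}^T \mathbf{C}_{N+1}$ are precisely the squares of the singular values of $\mathbf{C}_{N+1}$, this immediately gives
\begin{align*}
    \sigma_{\min}(\mathbf{C}_{N+1})^2 = N+1, \qquad \sigma_{\max}(\mathbf{C}_{N+1})^2 = 2(N+1).
\end{align*}
In particular $\mathbf{C}_{N+1}$ is invertible, so both $\kappa_2$ and $\kappa_F$ are well defined.

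Second, using $\|\mathbf{C}_{N+1}\|_2 = \sigma_{\max}$ and $\|\mathbf{C}_{N+1}^{-1}\|_2 = 1/\sigma_{\min}$, I compute
\begin{align*}
    \kappa_2(\mathbf{C}_{N+1}) = \sqrt{\frac{2(N+1)}{N+1}} = \sqrt{2}.
\end{align*}
Finally, applying the norm-equivalence inequality \eqref{eq:cond_num_ineq} with matrix dimension $N+1$ gives $\kappa_F(\mathbf{C}_{N+1}) \leq (N+1)\,\kappa_2(\mathbf{C}_{N+1}) = \sqrt{2}(N+1)$, which is the claimed bound.

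There is essentially no obstacle here, as the nontrivial content was already absorbed into Lemma \ref{lemma:orthogonal2} and Lemma \ref{lemma:dct2rps}; the only thing worth being careful about is stating the bound with the correct (strict vs.\ non-strict) inequality, which follows since $\kappa_F$ cannot equal $(N+1)\kappa_2$ unless all singular values coincide, and here they do not.
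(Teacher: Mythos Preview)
Your proposal is correct and follows essentially the same route as the paper: read off the eigenvalues of $\mathbf{C}_{N+1}^T\mathbf{C}_{N+1}$ from Lemma~\ref{lemma:dct2rps}, compute $\kappa_2(\mathbf{C}_{N+1})=\sqrt{2}$, and apply the norm comparison \eqref{eq:cond_num_ineq}. Your extra remark justifying the \emph{strict} inequality (since the singular values are not all equal) is a small refinement over the paper, which simply asserts it.
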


\begin{proof}
    From Lemma \ref{lemma:dct2rps} we see that the eigenvalues of $\mathbf{C}_{N+1}^T\mathbf{C}_{N+1}$ are $N$-times $\lambda_{max} =  2(N+1)$ and once $\lambda_{\min} = N+1$. Therefore, the spectral norms are 
    $$||\mathbf{C}_{N+1}||_2^2 = \lambda_{\max} = 2(N+1)$$
    and
    $$||\mathbf{C}_{N+1}^{-1}||_2^2 = \frac{1}{\lambda_{\min}} = \frac{1}{N+1}.$$
    As before, the condition number is the product of the two,
    $$\kappa_2(\mathbf{C}_{N+1}) = \sqrt{\frac{\lambda_{\max}}{\lambda_{\min}}} = \sqrt{\frac{2(N+1)}{N+1}} = \sqrt{2}$$
    and by \eqref{eq:cond_num_ineq} we conclude that
    $$\kappa_F(A) < \sqrt{2}(N+1).$$
\end{proof}

\subsection{LWE and structured variants}
The \textit{Learning With Errors} paradigm is one of the most promising in the quest for quantum-safe classical cryptography. The security of LWE relies on lattice problems, and it consists of, loosely speaking, efficiently solving a linear system that has been tweaked by a random vector.

There exist a number of variants of this paradigm that differ in the
mathematical structure of the data used in the system. In purely LWE schemes, also referred to as \textit{unstructured} LWE variants, the mathematical structure is just the ring $\Z / q\Z$ of rational integers modulo $q$. 

For \textit{structured} LWE variants, we have \textit{Ring-LWE} and 
\textit{Polynomial-LWE}. In these paradigms, the mathematical structures from which the terms are drawn are rings of integers or polynomial quotient rings.
\begin{definition}[R/PLWE distributions]
Let $K$ be a number field and $\mathcal{O}_K$ be its ring of integers. Let $q$ be a rational prime, $f(x) \in \Z[x]$ a monic irreducible polynomial in $\Z[x]$, and $\mathcal{O}_f$ the associated quotient ring $\Z[x]/(f(x))$. Let $\chi$ be a discrete random distribution with values in $\mathcal{O}_K/q\mathcal{O}_K$ 
(resp. $\mathcal{O}_f/q\mathcal{O}_f$). For $s \in$ 
$\mathcal{O}_K/q\mathcal{O}_K$ (resp. $\mathcal{O}_f/q\mathcal{O}_f$), we define the \emph{(primal) RLWE (resp. PLWE) distribution} $\mathcal{A}_{s, \chi}$ 
\emph{(resp. $\mathcal{B}_{s, \chi}$)} as the distribution over 
$\mathcal{O}_K/q\mathcal{O}_K \times \mathcal{O}_K/q\mathcal{O}_K$ 
(resp. $\mathcal{O}_f/q\mathcal{O}_f \times \mathcal{O}_f/q\mathcal{O}_f$) obtained by sampling an element $a$ in $\mathcal{O}_K/q\mathcal{O}_K$ (resp. $\mathcal{O}_f/q\mathcal{O}_f$) uniformly at random, drawing an element $e$ according to $\chi$, and outputting the pair $(a, a\cdot s + e)$.
\end{definition}

\begin{definition}[R/PLWE problems]
Following the same notation as above, the two R/PLWE problems are as follows:

\emph{Search RLWE (resp. PLWE)} asks an adversary to return the secret $s$ with non-negligible probability when the adversary is given access to arbitrarily many samples of the RLWE (resp. PLWE) distribution.

\emph{Decision RLWE (resp. PLWE)} asks the adversary to decide whether a given random distribution is either uniform or the RLWE (resp. PLWE) distribution, with non-negligible probability when the adversary is given access to arbitrarily many samples of that given random distribution. 
\end{definition}

\section{Reduction formulas}
\label{sec:reduction_formulas}

It is well known that for any $n \in \Z_+$ and $m = \phi(n) / 2$, the elements in the set $S = \{\cos \left(2\pi j / n\right) :  j=0,1,\dots, m-1\}$ are linearly independent over $\Q$. However, the values $\cos \left(2\pi j / n\right)$ for $j \geq m$ are $\Q$-linearly dependent over $S$. In general, the dependency is true for $\cos \left(2\pi \sigma j / n \right)$ for any $\sigma$ coprime to $n$. We proceed to derive explicit formulas for the dependency relations in terms of the modified Chebyshev polynomials $V_0(x),V_1(x), \dots, V_{m-1}(x)$ from Definition \ref{def:modified_chebyshev_polynomials_V}. These formulas will be used later for matrix eliminations in the proof of the RLWE--PLWE equivalence and to find small degree representatives in the quotient ring $\Z[x] / (\Psi_n(x))$.

\subsection{Minimal polynomial of \texorpdfstring{$\psi_n$}{psi\_n}}
\label{sec:minimal_polynomial}
To derive formulas for reduction in the quotient ring $\Z[x] / (\Psi_n(x))$, we first need explicit formulas for the minimal polynomials $\Psi_n(x)$.
\begin{lemma}
\label{lemma:min_poly_prime}
    Let $p >3$ be a prime number and $m = \phi(p) / 2 = (p-1) / 2$. Then the minimal polynomial of $\psi_{p} = 2 \cos(2\pi / p)$ over $\Q$ is
    $$\Psi_p(x) = \sum_{i=0}^m V_i(x) = V_m(x) + V_{m-1}(x) + \dots + V_1(x) + V_0(x).$$
\end{lemma}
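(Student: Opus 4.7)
The plan is to verify the three defining properties of a minimal polynomial: that the proposed expression is monic, has the correct degree, and vanishes at $\psi_p$. Degree and monicity are essentially free from the definition of the modified Chebyshev polynomials, so the substantive step is the evaluation at $\psi_p$, which I expect to reduce immediately to Lemma \ref{lemma:orthogonal}.

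First I would observe that $V_m(x)$ is monic of degree $m$ and each $V_i(x)$ with $i < m$ has strictly smaller degree (this follows from the recursive definition of $V_n$, or alternatively from $V_n = 2T_n(x/2)$ combined with the known degree of Chebyshev polynomials). Hence the polynomial
\[
P(x) := \sum_{i=0}^{m} V_i(x)
\]
is monic of degree $m = \phi(p)/2 = [\Q(\psi_p):\Q]$.

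Next I would evaluate $P$ at $\psi_p = 2\cos(2\pi/p)$. Using $V_0 = 1$ together with the identity $V_i(2\cos\theta) = 2\cos(i\theta)$ stated immediately after Definition \ref{def:modified_chebyshev_polynomials_V}, we get
\[
P(\psi_p) = 1 + \sum_{i=1}^{m} 2\cos\!\left(\tfrac{2\pi i}{p}\right).
\]
Specializing Lemma \ref{lemma:orthogonal} to $s = 1$, $\sigma = 1$, and $N = (p-1)/2 = m$ shows that this sum is exactly $0$, so $P(\psi_p) = 0$.

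Finally, since $P(x)$ is a monic polynomial in $\Q[x]$ of degree equal to $[\Q(\psi_p):\Q]$ that vanishes at $\psi_p$, it must coincide with the minimal polynomial $\Psi_p(x)$. The main obstacle is really only the bookkeeping between the $\Q$-linear identity among cosines and its packaging in the $V_i$ basis; once Lemma \ref{lemma:orthogonal} is in hand, the rest is a one-line degree count.
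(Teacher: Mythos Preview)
Your argument is correct, but it is not the route the paper takes. The paper starts from the symmetry $\cos\!\bigl(2\pi(m+1)/p\bigr)=\cos\!\bigl(2\pi m/p\bigr)$ to see that $\psi_p$ is a root of $V_{m+1}(x)-V_m(x)$, peels off the obvious factor $x-2$ (using $V_k(2)=2$), and then identifies the quotient with $\sum_{i=0}^m V_i(x)$ via the telescoping identity $(V_1(x)-2)\sum_{i=0}^m V_i(x)=V_{m+1}(x)-V_m(x)$ coming from Proposition~\ref{prop:Vn_times_Vm}. You instead write down the candidate $\sum_{i=0}^m V_i(x)$ directly, check monicity and degree, and then invoke Lemma~\ref{lemma:orthogonal} (with $s=1$, $\sigma=1$, $N=m$) to see it vanishes at $\psi_p$. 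Your route is shorter and cleanly recycles a lemma already in the preliminaries; the paper's route is more constructive and, via the telescoping step, exhibits the algebraic mechanism that later generalizes to the composition formulas $\Psi_{p^s}=\Psi_p\!\circ V_{p^{s-1}}$ and the $2^rp$ case. There is no circularity in your use of Lemma~\ref{lemma:orthogonal}, since that lemma is proved independently from the geometric-series identity for roots of unity.
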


\begin{proof}
    From the symmetry of the cosines 
    $$
     \cos \left( \frac{2 \pi (m+1)}{p} \right) = \cos \left( \frac{2 \pi m}{p} \right),
    $$
    we conclude that $\psi_p$ is a root of $P(x) = V_{m+1}(x) - V_{m}(x)$ and then $\Psi_p(x) | P(x)$. Also, since $V_k(2) = 2$ for every $k \geq 1$, we find that $P(2) = 0$ and $x-2 | P(x)$. Finally, since $\Psi_p(x) | \frac{P(x)}{x-2}$ and both polynomials are monic of degree $m$, we conclude that
    \begin{align*}
        \Psi_p(x) &=  \frac{P(x)}{x-2} \\
        &= \frac{ V_{m+1}(x) - V_{m}(x)}{V_1(x) - 2} \\
        &=  V_{m}(x) + V_{m-1}(x) + \dots + V_1(x) + V_0(x).
    \end{align*}
    The last step follows from expanding the product 
    $$\left(V_{m}(x) + V_{m-1}(x) + \dots + V_1(x) + V_0(x)\right)(V_1(x)-2)$$
    and using the relation $V_1(x)V_j(x) = V_{j+1}(x) + V_{j-1}(x)$ from Proposition \ref{prop:Vn_times_Vm}, which leads to a telescoping sum.
\end{proof}

\begin{corollary}
\label{cor:min_poly_of_p_to_s}
    Let $p > 3$ be a prime number, $s \geq 1$, and $k = (p-1) / 2$. Then the minimal polynomial of $\psi_{p^s}$ is
\begin{align*}
\Psi_{p^s}(x) & = V_{kp^{s-1}}(x) + V_{(k-1)p^{s-1}}(x) + \dots + V_{p^{s-1}}(x) + V_0(x)\\
& = \sum_{i =0}^{k} V_{ip^{s-1}}(x).    
\end{align*}
\end{corollary}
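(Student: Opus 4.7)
The natural plan is to bootstrap from the prime case already established in Lemma \ref{lemma:min_poly_prime} by using the composition identity $V_n(V_m(x)) = V_{mn}(x)$ from Proposition \ref{prop:properties_of_V}. The key observation is that
$$V_{p^{s-1}}(\psi_{p^s}) = V_{p^{s-1}}\bigl(2\cos(2\pi/p^s)\bigr) = 2\cos(2\pi p^{s-1}/p^s) = 2\cos(2\pi/p) = \psi_p,$$
using the property $V_n(2\cos\theta) = 2\cos(n\theta)$.

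Given this, I would substitute $V_{p^{s-1}}(x)$ into Lemma \ref{lemma:min_poly_prime}. Since $\Psi_p(\psi_p) = 0$ and $\psi_p = V_{p^{s-1}}(\psi_{p^s})$, we get
$$0 = \Psi_p\bigl(V_{p^{s-1}}(\psi_{p^s})\bigr) = \sum_{i=0}^{k} V_i\bigl(V_{p^{s-1}}(\psi_{p^s})\bigr) = \sum_{i=0}^{k} V_{ip^{s-1}}(\psi_{p^s}),$$
where the last equality uses the composition rule $V_i(V_{p^{s-1}}(x)) = V_{ip^{s-1}}(x)$. This shows that $\psi_{p^s}$ is a root of the polynomial $Q(x) := \sum_{i=0}^{k} V_{ip^{s-1}}(x)$.

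To finish, I would do a degree count. Each $V_{ip^{s-1}}(x)$ is monic of degree $ip^{s-1}$, so $Q(x)$ is monic of degree $kp^{s-1} = \tfrac{(p-1)}{2}p^{s-1} = \phi(p^s)/2$. By the preceding lemma on the degree of the extension, this is exactly $[\Q(\psi_{p^s}):\Q]$, i.e.\ the degree of $\Psi_{p^s}(x)$. Since $\Psi_{p^s}(x) \mid Q(x)$ and both polynomials are monic of the same degree, they coincide.

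There is no real obstacle here; the proof is essentially an application of the composition identity plus matching degrees, and the only thing to be slightly careful about is checking the degree of $Q(x)$ against $\phi(p^s)/2$. No verification that $Q(x)$ is irreducible is needed, since irreducibility comes for free from the equality of degrees with the known minimal polynomial.
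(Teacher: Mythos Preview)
Your proof is correct and follows essentially the same route as the paper: both use Lemma~\ref{lemma:min_poly_prime} together with the composition identity $V_i(V_{p^{s-1}}(x)) = V_{ip^{s-1}}(x)$ from Proposition~\ref{prop:properties_of_V}. The only difference is that the paper invokes an external result (\cite[Theorem 2.6 (CR3)]{loper2016resultants}) asserting directly that $\Psi_{p^s}(x) = \Psi_p(V_{p^{s-1}}(x))$, whereas you prove this identity yourself via the root-plus-degree-count argument, making your version self-contained.
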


\begin{proof}
    From \cite[Theorem 2.6 (CR3)]{loper2016resultants} and Lemma \ref{lemma:min_poly_prime} we obtain 
    \begin{align*}
     \Psi_{p^s}(x) &= \Psi_p(V_{p^{s-1}}(x))
     = \sum_{i=0}^m V_i(V_{p^{s-1}}(x)) = \sum_{i=0}^m V_{i p^{s-1}}(x).
    \end{align*}
\end{proof}

\begin{lemma} 
\label{lemma:min_poly_2powerp}
    Let $p >3$ be a prime number and $k = (p-1) / 2$. Then
    $$\Psi_{2^r p}(x) = \sum_{i=0}^k (-1)^{k - i}V_{i2^{r-1}}(x).$$
%    $$V_{(p-1)2^{r-2}}(x) - V_{(p-3)2^{r-2}}(x) + \dots + (-1)^{\frac{p-3}{2}} V_{2\cdot 2^{r-2}}(x) + (-1)^{\frac{p-1}{2}} V_0(x).$$
\end{lemma}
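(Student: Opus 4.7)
The plan is to reduce Lemma \ref{lemma:min_poly_2powerp} to the formula for $\Psi_p(x)$ already established in Lemma \ref{lemma:min_poly_prime}, by observing that $\psi_{2p}$ is, up to sign, a Galois conjugate of $\psi_p$, and then to lift the resulting identity from $r = 1$ to arbitrary $r \geq 1$ via composition with $V_{2^{r-1}}$, exactly as was done in Corollary \ref{cor:min_poly_of_p_to_s}.

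For the base case $r = 1$, I first note that since $p$ is odd, $(p+1)/2$ is an integer coprime to $p$, so $\zeta_p^{(p+1)/2}$ is a primitive $p$-th root of unity and hence a Galois conjugate of $\zeta_p$. The identity $\zeta_p^{(p+1)/2} = \zeta_{2p}^{p+1} = -\zeta_{2p}$ then gives
\begin{align*}
\psi_{2p} \;=\; \zeta_{2p} + \zeta_{2p}^{-1} \;=\; -\bigl(\zeta_p^{(p+1)/2} + \zeta_p^{-(p+1)/2}\bigr),
\end{align*}
so $\psi_{2p}$ is the negative of a Galois conjugate of $\psi_p$. Hence $\Psi_p(-\psi_{2p}) = 0$, and comparing leading coefficients yields $\Psi_{2p}(x) = (-1)^k \Psi_p(-x)$. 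Combining this with Lemma \ref{lemma:min_poly_prime} and the parity identity $V_i(-x) = (-1)^i V_i(x)$ (immediate from $V_i(-2\cos\theta) = 2\cos(i(\pi - \theta)) = (-1)^i V_i(2\cos\theta)$ for $i \geq 1$, and trivially true for $i = 0$) produces
\begin{align*}
\Psi_{2p}(x) \;=\; (-1)^k \sum_{i=0}^k (-1)^i V_i(x) \;=\; \sum_{i=0}^k (-1)^{k-i} V_i(x),
\end{align*}
which is the desired base case.

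For general $r \geq 1$, I would mimic the lifting step of Corollary \ref{cor:min_poly_of_p_to_s}. Since $V_{2^{r-1}}(\psi_{2^r p}) = 2\cos(\pi/p) = \psi_{2p}$, the element $\psi_{2^r p}$ is a root of $\Psi_{2p}(V_{2^{r-1}}(x))$. Both this polynomial and $\Psi_{2^r p}(x)$ are monic of degree $k \cdot 2^{r-1} = \phi(2^r p)/2$, so they coincide. Applying the composition identity $V_i(V_{2^{r-1}}(x)) = V_{i \cdot 2^{r-1}}(x)$ from Proposition \ref{prop:properties_of_V} to the base-case expansion then finishes the proof.

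The main obstacle is handling the base case cleanly. The negation trick above is efficient but requires care to verify that $\zeta_p^{(p+1)/2}$ is indeed a primitive $p$-th root of unity and to track the sign $(-1)^k$ from the leading coefficient of $\Psi_p(-x)$. A more computational alternative factors $V_{k+1}(x) + V_k(x) = (x+2)\Psi_{2p}(x)$, observing that both $\psi_{2p}$ and $-2$ are roots of the left-hand side (the former via $\cos(k\pi/p) + \cos((k+1)\pi/p) = 2\cos(\pi/2)\cos(\pi/(2p)) = 0$), and then verifies $(x+2)\sum_{i=0}^k (-1)^{k-i} V_i(x) = V_{k+1}(x) + V_k(x)$ by a telescoping expansion using $V_1 V_i = V_{i+1} + V_{i-1}$; the boundary cases $V_1 V_0 = V_1$ and $V_1^2 = V_2 + 2$ need separate bookkeeping but do not change the outcome.
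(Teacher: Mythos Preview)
Your proof is correct but proceeds along a genuinely different path from the paper's argument. The paper does not separate out a base case; instead it invokes two identities from \cite{loper2016resultants} to write
\[
\Psi_{2^r p}(x) \;=\; \frac{\Psi_{2^r}(V_p(x))}{\Psi_{2^r}(V_1(x))} \;=\; \frac{V_{p\,2^{r-2}}(x)}{V_{2^{r-2}}(x)},
\]
and then verifies directly, via the product rule $V_{2^{r-2}} V_{i\,2^{r-1}} = V_{(2i+1)2^{r-2}} + V_{(2i-1)2^{r-2}}$ and a telescoping cancellation, that $V_{2^{r-2}}(x)\sum_{i=0}^k (-1)^{k-i} V_{i\,2^{r-1}}(x) = V_{p\,2^{r-2}}(x)$. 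Your route instead reduces everything to the already-proved prime case by the negation identity $\Psi_{2p}(x) = (-1)^k \Psi_p(-x)$ and then lifts by composition with $V_{2^{r-1}}$, in exact parallel to how Corollary~\ref{cor:min_poly_of_p_to_s} was obtained from Lemma~\ref{lemma:min_poly_prime}. What you gain is self-containment (no need for the quotient formula (CR4), only the composition rule of Proposition~\ref{prop:properties_of_V}) and a uniform argument that also covers $r=1$; the paper's quotient $V_{p\,2^{r-2}}/V_{2^{r-2}}$ implicitly needs $r\ge 2$. What the paper's approach buys is a single computation with no separate base step, at the cost of importing the external identity. Both arguments ultimately rest on a telescoping verification, but yours places it (optionally) only at the $r=1$ level rather than at general~$r$.
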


\begin{proof}
From \cite[Theorem 2.6 (CR1) and (CR4)]{loper2016resultants}, we obtain 
\begin{align*}
 \Psi_{2^r p}(x) = \frac{\Psi_{2^r}(V_p(x))}{\Psi_{2^r}(V_1(x))} 
 = \frac{V_{p 2^{r-2}}(x)}{V_{2^{r-2}}(x)}.
\end{align*}
On the other hand, notice that
\begin{align*}
    V_{2^{r-2}}(x) \sum_{i=0}^k (-1)^{k-i} V_{i 2^{r-1}}(x)  &= \sum_{i=0}^k (-1)^{k-i} V_{2^{r-2}}(x) V_{i 2^{r-1}}(x) \\
    &= (-1)^k V_{2^{r-2}}(x)V_{0}(x) \,  \\
    &+\sum_{i=1}^k (-1)^{k-i} \left( V_{(2i + 1)2^{r-2}}(x) + V_{(2i-1)2^{r-2}}(x)\right)\\
    &= (-1)^{k}V_{2^{r-2}}(x)  \\
    &+ \left( V_{(2k+1)2^{r-2}}(x) + (-1)^{k-1}V_{2^{r-2}}(x) \right) \\
    &= V_{(2k+1)2^{r-2}}(x) \\
    &= V_{p2^{r-2}}(x).    
\end{align*}
Now, reordering the equation gives 
$$ \frac{V_{p2^{r-2}}(x)}{V_{2^{r-2}}(x)} =\sum_{i=0}^k (-1)^{k-i} V_{i 2^{r-1}}(x).$$
\end{proof}

\begin{corollary}\label{cor:2rps}
    Let $p$ be an odd prime, $n = 2^r p^s$ and $k = (p-1) / 2$. Furthermore, denote by $$m = \frac{\phi(n)}{2} = (p-1) 2^{r-2}p^{s-1}$$
    the degree of the number field $\Q(\psi_n)$. Then the minimal polynomial of $\psi_n$ is
    $$\Psi_{n}(x) = \sum_{i =0}^{k} (-1)^{k-i} V_{i 2^{r-1}p^{s-1}}(x)$$
\end{corollary}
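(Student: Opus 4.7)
The plan is to mirror the strategy of Corollary \ref{cor:min_poly_of_p_to_s}, reducing the general case $n = 2^r p^s$ to the already-established base case $n = 2^r p$ of Lemma \ref{lemma:min_poly_2powerp} via a composition with $V_{p^{s-1}}$. The key algebraic input is the identity $V_a \circ V_b = V_{ab}$ from Proposition \ref{prop:properties_of_V}, which lets such a composition act term-by-term on the sum.

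First, I would establish the composition identity
\begin{align*}
\Psi_{2^r p^s}(x) = \Psi_{2^r p}(V_{p^{s-1}}(x)).
\end{align*}
This is the same composition rule (CR3) of \cite{loper2016resultants} used in Corollary \ref{cor:min_poly_of_p_to_s}, now with $2^r p$ in the role of $p$ and $p^{s-1}$ in the role of $p^{s-1}$. If one prefers a direct verification, observe that the right-hand side is monic of degree $\deg(\Psi_{2^r p}) \cdot p^{s-1} = (p-1)2^{r-2} p^{s-1} = \phi(n)/2 = \deg(\Psi_n)$, and has $\psi_n$ as a root because
\begin{align*}
V_{p^{s-1}}(\psi_n) = 2\cos\!\left(\tfrac{2\pi p^{s-1}}{2^r p^s}\right) = 2\cos\!\left(\tfrac{2\pi}{2^r p}\right) = \psi_{2^r p},
\end{align*}
and $\psi_{2^r p}$ is a root of $\Psi_{2^r p}$. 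Since $\Psi_n$ is the minimal polynomial of $\psi_n$ over $\Q$, two monic polynomials of equal degree that share this root must coincide.

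Second, I would plug in Lemma \ref{lemma:min_poly_2powerp} and apply the composition rule $V_a(V_b(x)) = V_{ab}(x)$ termwise, obtaining
\begin{align*}
\Psi_n(x) = \Psi_{2^r p}(V_{p^{s-1}}(x)) = \sum_{i=0}^{k} (-1)^{k-i} V_{i 2^{r-1}}(V_{p^{s-1}}(x)) = \sum_{i=0}^{k} (-1)^{k-i} V_{i 2^{r-1} p^{s-1}}(x),
\end{align*}
which is the claimed formula.

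The only non-bookkeeping step is the composition identity in Step 1; everything after it is mechanical. If the cited composition rule from \cite{loper2016resultants} is invoked directly, the argument is essentially a one-liner built on top of Lemma \ref{lemma:min_poly_2powerp}. Otherwise, the short degree-and-root check above fills in the gap without requiring new ideas.
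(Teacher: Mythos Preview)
Your proposal is correct and follows essentially the same approach as the paper: invoke the composition identity $\Psi_{2^r p^s}(x) = \Psi_{2^r p}(V_{p^{s-1}}(x))$ from \cite{loper2016resultants}, then apply Lemma~\ref{lemma:min_poly_2powerp} and the rule $V_a\circ V_b = V_{ab}$ termwise. The only discrepancy is that the paper cites rule (CR1) here rather than (CR3), but your direct degree-and-root verification makes the argument self-contained regardless of which composition rule is quoted.
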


\begin{proof}
    From \cite[Theorem 2.6 (CR1)]{loper2016resultants} and Lemma \ref{lemma:min_poly_2powerp} we obtain 
\begin{align*}
 \Psi_{2^r p^s}(x) &= \Psi_{2^r p}(V_{p^{s-1}}(x)) \\
 &= \sum_{i =0}^{k} (-1)^{k-i} V_{i 2^{s-1}p^{s-1}}(x).
\end{align*} 
\end{proof}

\subsection{Modular reduction}
With the formulas for the minimal polynomials in hand, we are ready to express the dependency relations of the elements $\cos(2\pi j / n)$. We will study the two cases, $n = p^s$ and $n = 2^r p^s$, separately.

\subsubsection{Case \texorpdfstring{$n = p^s$}{n = p\^s}}
Let $N = (p^s - 1) / 2$, $k = (p-1) / 2$ and $m = \phi(p^s) / 2 = p^{s-1}k$. Therefore, the minimal polynomial of $\psi_n = 2 \cos(2 \pi / n)$ over $\Q$ is
\begin{align*}
    \Psi_{n}(x) &= \sum_{j = 0}^{k} V_{j p^{s-1}}(x)
\end{align*}
as given by Corollary \ref{cor:min_poly_of_p_to_s}.
\begin{lemma}
\label{lemma:equivalence_p_to_s}
    In the quotient ring $\Z[x] / (\Psi_{n}(x))$ we have the following identities 
    \begin{align}
    \label{eq:Vm_in_quotient}
        V_m(x) = &- \sum_{j=0}^{k-1} V_{j p^{s-1}}(x), \\
    \label{eq:V_m+l_in_quotient}
        V_{m+l}(x) =&  - \sum_{j=0}^{k-1} V_{j p^{s-1} + l}(x) - \sum_{j =1}^{k}V_{j p^{s-1} - l} (x).
    \end{align} 
\end{lemma}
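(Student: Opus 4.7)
The plan is as follows. The first identity \eqref{eq:Vm_in_quotient} is immediate: since $\Psi_n(x) \equiv 0$ in $\Z[x]/(\Psi_n(x))$ and $\Psi_n(x) = \sum_{j=0}^{k} V_{jp^{s-1}}(x)$ by Corollary \ref{cor:min_poly_of_p_to_s}, isolating the top-degree term $V_{kp^{s-1}}(x) = V_m(x)$ yields the stated expression.

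For \eqref{eq:V_m+l_in_quotient}, the plan is to multiply the first identity by $V_l(x)$ (taking $1 \leq l < p^{s-1}$, which is the range ensuring that every index appearing on the right-hand side lies strictly between $0$ and $m$). Using Proposition \ref{prop:Vn_times_Vm} on the left, we obtain $V_l(x)V_m(x) = V_{m+l}(x) + V_{m-l}(x)$ (a genuine polynomial identity, not only modulo $\Psi_n$), so
\[
V_{m+l}(x) \;=\; V_l(x)V_m(x) - V_{m-l}(x).
\]

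Next, I would expand $V_l(x) \cdot \bigl(-\sum_{j=0}^{k-1} V_{jp^{s-1}}(x)\bigr)$ term by term. The $j=0$ summand contributes $-V_l(x)V_0(x) = -V_l(x)$, while for $1 \leq j \leq k-1$ the range assumption $l < p^{s-1} \leq jp^{s-1}$ lets us apply Proposition \ref{prop:Vn_times_Vm} to get $V_l(x)V_{jp^{s-1}}(x) = V_{jp^{s-1}+l}(x) + V_{jp^{s-1}-l}(x)$. Hence in the quotient ring
\[
V_l(x)V_m(x) \;\equiv\; -V_l(x) - \sum_{j=1}^{k-1} V_{jp^{s-1}+l}(x) - \sum_{j=1}^{k-1} V_{jp^{s-1}-l}(x).
\]

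Subtracting $V_{m-l}(x) = V_{kp^{s-1}-l}(x)$ from both sides and absorbing $-V_l(x) = -V_{0\cdot p^{s-1}+l}(x)$ into the first sum as the $j=0$ term and $-V_{kp^{s-1}-l}(x)$ into the second sum as the $j=k$ term produces exactly \eqref{eq:V_m+l_in_quotient}. The only real obstacle is the bookkeeping with the two boundary terms (the $j=0$ summand from the product expansion and the $V_{m-l}(x)$ term coming from the product formula applied to $V_l \cdot V_m$), which must be re-absorbed into the sums at the correct endpoints to match the stated index ranges.
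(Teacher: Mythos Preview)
Your argument is correct and follows the same route as the paper: derive \eqref{eq:Vm_in_quotient} by rearranging $\Psi_n(x)\equiv 0$, then obtain \eqref{eq:V_m+l_in_quotient} by multiplying through by $V_l(x)$, expanding each product $V_l(x)V_{jp^{s-1}}(x)$ via Proposition~\ref{prop:Vn_times_Vm}, and absorbing the two boundary terms $-V_l(x)$ and $-V_{m-l}(x)$ into the sums at $j=0$ and $j=k$. The only cosmetic difference is that the paper records the range $1\le l\le N-m=(p^{s-1}-1)/2$ (which is what is actually needed later), whereas you allow the slightly larger $1\le l<p^{s-1}$; the manipulation is identical in either case.
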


\begin{proof}

In the ring $\Z[x] / (\Psi_{n}(x))$ we have 
\begin{align*}
    \Psi_n(x) = \sum_{j =0}^{k} V_{jp^{s-1}}(x) = 0.
\end{align*}
Therefore,
$$ 0 = V_m(x) + \sum_{j =0}^{k-1} V_{jp^{s-1}}(x),$$
and we conclude \eqref{eq:Vm_in_quotient}.

Additionally, for any $l$ satisfying $1 \leq l \leq N - m = (p^{s-1} - 1)/2 < m$, we have the relation
\begin{align*}
    V_m(x)V_l(x) &= V_{m + l}(x) + V_{m-l}(x)
\end{align*}
between the modified Chebyshev polynomials. Now, by multiplying the relation \eqref{eq:Vm_in_quotient} by $V_l(x)$ on both sides we find that
\begin{align*}
    % V_m(x)V_l(x) &= - V_l(x) - \sum_{j = 1}^{k-1} V_{j p^{s-1}}(x)V_l(x) \\
    V_{m+l}(x) + V_{m-l}(x) &= - V_l(x) -\sum_{j = 1}^{k-1} V_{j p^{s-1} + l}(x) + V_{j p^{s-1} - l}(x).
\end{align*}
Thus, we reorder the terms to
\begin{align*}
    V_{m+l}(x) &= - V_l(x) - V_{m-l}(x) - \sum_{j = 1}^{k-1} V_{j p^{s-1} + l}(x) + V_{j p^{s-1} - l}(x) \\
    & = - \sum_{j=0}^{k-1} V_{j p^{s-1} + l}(x) - \sum_{j =1}^{k}V_{j p^{s-1} - l} (x),
\end{align*}
which concludes the proof for \eqref{eq:V_m+l_in_quotient}.
\end{proof}

Note that under the isomorphism of rings $\Z[x] / (\Psi_{n}(x)) \cong  \Z[\psi_n]$ via the evaluation map $x \mapsto \psi_n$, we discover the relations
\begin{align*}
    V_{m+l}(\psi_n) &= - \sum_{j=0}^{k-1} V_{j p^{s-1} + l}(\psi_n) - \sum_{j =1}^{k}V_{j p^{s-1} - l} (\psi_n).
\end{align*}
Moreover, we have the isomorphisms $\sigma \in \text{Gal}(\Q(\psi_n) / \Q)$ giving the relation
\begin{align*}
    V_{m+l}(\psi_{\sigma,n}) &= - \sum_{j=0}^{k-1} V_{j p^{s-1} + l}(\psi_{\sigma,n}) - \sum_{j =1}^{k}V_{j p^{s-1} - l} (\psi_{\sigma,n}).
\end{align*}
Finally, by using the property $V_j(2 \cos(\theta)) = 2 \cos(j \theta)$ we have derived the formula
\begin{align}
\label{eq:reduction_formula_for_column_elimination}
    2 \cos \left(\frac{2 \pi \sigma (m+l)}{n}\right) &+ \sum_{j = 0}^{k-1} 2 \cos\left(\frac{2 \pi \sigma (j p^{s-1} + l)}{n}\right)  \nonumber \\
    &+ \sum_{j = 1}^{k} 2 \cos\left( \frac{2 \pi \sigma (j p^{s-1}-l)}{n}\right) = 0. 
\end{align}
This is the main equation that we will use in Section \ref{sec:plwe_rlwe_equivalence} to perform column elimination on a certain cosine matrix. Note that for the given bounds on $l$ and $j$, all the indices $jp^{s-1} + l$ and $jp^{s-1} - l$ are distinct.

\subsubsection{Case \texorpdfstring{$n=2^r p^s$}{n = 2\^r p\^s}}

Let $N = 2^{r-2}p^s - 1 $, $k = (p-1) / 2$ and $m = \phi(n) / 2 = k 2^{r-1}p^{s-1}$. From Corollary \ref{cor:2rps}, we know that the minimal polynomial of $\psi_n$ equals
\begin{align*}
\Psi_{2^r p^s}(x) &= \sum_{j =0}^{k} (-1)^{k-j} V_{j 2^{r-1}p^{s-1}}(x).
\end{align*}

\begin{lemma}
\label{lemma:2rps_reduction_relations}
    In the quotient ring $\Z[x] / (\Psi_{n}(x))$ we have the following identities 
    \begin{align}
    \label{eq:Vm_reduction_pow2}
        V_m(x) &= \sum_{j=0}^{k-1} (-1)^{k-j-1} V_{j 2^{r-1}p^{s-1}}(x) \\
    \label{eq:V_m+l_reduction_pow2}
        V_{m+l}(x) &= \sum_{j=0}^{k-1} (-1)^{k-1-j} V_{j2^{r-1} p^{s-1} + l}(x) + \sum_{j =1}^{k} (-1)^{k+1-j}V_{j 2^{r-1}p^{s-1} - l}(x).
    \end{align} 
\end{lemma}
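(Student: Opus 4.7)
The lemma is the direct analogue of Lemma \ref{lemma:equivalence_p_to_s} for the composite conductor case, so the plan is to mirror that proof while carefully tracking the alternating signs $(-1)^{k-j}$ that appear in the minimal polynomial of Corollary \ref{cor:2rps}. The two identities are clearly separated: \eqref{eq:Vm_reduction_pow2} follows directly from setting $\Psi_n(x)=0$, while \eqref{eq:V_m+l_reduction_pow2} is obtained by multiplying \eqref{eq:Vm_reduction_pow2} by $V_l(x)$ and applying the Chebyshev product rule.

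For \eqref{eq:Vm_reduction_pow2}, I would recall from Corollary \ref{cor:2rps} that in $\Z[x]/(\Psi_n(x))$ we have
$$0 = \Psi_n(x) = \sum_{j=0}^{k} (-1)^{k-j} V_{j 2^{r-1} p^{s-1}}(x).$$
Isolating the top term, whose index is $k 2^{r-1}p^{s-1} = m$ and whose sign is $(-1)^0 = 1$, and moving the remaining $k$ summands to the other side produces exactly $V_m(x) = \sum_{j=0}^{k-1}(-1)^{k-j-1} V_{j 2^{r-1}p^{s-1}}(x)$.

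For \eqref{eq:V_m+l_reduction_pow2}, I would multiply both sides of \eqref{eq:Vm_reduction_pow2} by $V_l(x)$, where $1 \le l \le N - m$. By Proposition \ref{prop:Vn_times_Vm}, the left-hand side becomes $V_{m+l}(x) + V_{m-l}(x)$ (using $m \neq l$, which holds in this range), and each summand on the right-hand side with $j \geq 1$ splits as $V_{j2^{r-1}p^{s-1}+l}(x) + V_{j 2^{r-1}p^{s-1} - l}(x)$, while the $j=0$ term contributes just $V_l(x)$ since $V_0(x) = 1$. Rearranging to isolate $V_{m+l}(x)$, I would then repackage the isolated $V_l(x)$ as the $j=0$ term of the first sum (matching $(-1)^{k-1} = (-1)^{k-1-0}$) and absorb the transposed $-V_{m-l}(x)$ as the $j=k$ term of the second sum (matching $(-1)^{k+1-k} = -1$), noting that $(-1)^{k-1-j} = (-1)^{k+1-j}$ so the two sign conventions used in the statement are consistent.

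The main obstacle is purely bookkeeping: verifying that every index $j 2^{r-1}p^{s-1} \pm l$ appearing on the right is distinct (so no further cancellation or merging occurs) and that the signs $(-1)^{k-1-j}$ and $(-1)^{k+1-j}$ in the stated identity correctly match both the alternating signs inherited from $\Psi_n(x)$ and the two boundary contributions from $V_l(x)$ and $-V_{m-l}(x)$. Once this index arithmetic is checked, no further input beyond Proposition \ref{prop:Vn_times_Vm} and Corollary \ref{cor:2rps} is needed.
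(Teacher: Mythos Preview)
Your proposal is correct and follows exactly the same approach as the paper's own proof: derive \eqref{eq:Vm_reduction_pow2} by rearranging $\Psi_n(x)=0$ from Corollary~\ref{cor:2rps}, then obtain \eqref{eq:V_m+l_reduction_pow2} by multiplying through by $V_l(x)$ and applying Proposition~\ref{prop:Vn_times_Vm}. In fact your write-up is more detailed than the paper's, which simply states that the multiplication ``leads to the form'' without spelling out the absorption of the $j=0$ and $j=k$ boundary terms or the sign check you describe.
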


\begin{proof}
In the quotient $\Z[x] / (\Psi_{n}(x))$, we see that
\begin{align*}
    \Psi_n(x) = \sum_{j =0}^{k} (-1)^{k-j} V_{j 2^{r-1}p^{s-1}}(x) = 0,
\end{align*}
and \eqref{eq:Vm_reduction_pow2} is a rearrangement of this sum.
Moreover, for any index $l$ such that $1 \leq l \leq N - m = 2^{r-2}p^{s-1} - 1 < m$, we have as above the formula
\begin{align*}
    V_m(x)V_l(x) &= V_{m + l}(x) + V_{m-l}(x).
\end{align*}
% {\color{red} \textbf{IS $N - m = 2^{r-1}p^{s-1}(k+1) - 1$ CORRECT? }
% \begin{align*}
%     N - m &= 2^{r-2}p^s - 1 - k 2^{r-1}p^{s-1} \\
%     &= 2^{r-2}p^s - 1 - \frac{p-1}{2} 2^{r-1}p^{s-1} \\
%     &= 2^{r-2}p^s - 1 - \frac{p-1}2^{r-2}p^{s-1} \\
%     &= 2^{r-2}p^s - 1 - 2^{r-2}p^{s} + 2^{r-2}p^{s-1} \\
%     &= 2^{r-2}p^{s-1} - 1 \\
%     &< (p-1)2^{r-2}p^{s-1} = m.
% \end{align*}
% }
Multiplying the equation \eqref{eq:Vm_reduction_pow2} by $V_l(x)$ on both sides leads to the form
\begin{align*}
    V_{m+l}(x) &= \sum_{j=0}^{k-1} (-1)^{k-1-j} V_{j 2^{r-1}p^{s-1} + l}(x) + \sum_{j =1}^{k} (-1)^{k+1-j}V_{j 2^{r-1}p^{s-1} - l} (x).
\end{align*}
\end{proof}
Similarly to the case $n=p^s$, the Galois maps $\sigma \in \text{Gal}(\Q(\psi_n) / \Q)$ and the evaluation map $x \mapsto \psi_n$ endows us with the linear dependence relation between the cosine terms of the form $\cos \left(2 \pi \sigma (m+l) / n \right)$ and the set of cosines
\begin{align*}
    \cos\left(\frac{2 \pi \sigma (j 2^{r-1}p^{s-1} \pm l)}{n}\right), \quad \text{where }  0 \leq j \leq k, \,  1 \leq l \leq N-m.
\end{align*}
This relation will play an essential role in the proof of the PLWE--RLWE equivalence of the maximal real subextension of $\Q(\zeta_n)$.

\section{PLWE--RLWE equivalence}
\label{sec:plwe_rlwe_equivalence}
To show the equivalence of the RLWE and PLWE problems, the approach we take is identical to \cite{ahola2024fast}. Fix an odd prime $p$. Let $n = p^s$ or $n = 2^r p^s$ for $r \geq 1$ and $s \geq 1$. Further, let $m = \phi(n) / 2$. As stated in Section \ref{sec:preliminaries}, the set 
$$V := \{V_0(x), V_1(x), \hdots, V_{m-1}(x)\}$$ 
of the modified Chebyshev polynomials is a basis for $\Z[x] / (\Psi_n(x))$. Thus, the canonical embedding $\mathcal{M}: \Z[x] / (\Psi_n(x)) \to \R^m$ is given by
\begin{align*}
    a_0 V_0(x) + a_1 V_1(x) + \hdots + a_{m-1} V_{m-1}(x) &\mapsto \mathbf{M} (a_0, \, a_1, \hdots, \ a_{m-1})^T,
\end{align*}
where
\begin{align*}
    \underset{m \times m}{\mathbf{M}} &=
    \begin{bmatrix}
        1 & 2\cos(2\pi / n) & 2 \cos(2 \pi 2 / n) & \hdots & 2\cos(2\pi (m-1) / n) \\
        \vdots & \vdots & \vdots & \hdots  & \vdots \\
        1 & 2\cos(2\pi \sigma / n) & 2 \cos(2 \pi \sigma 2 / n) & \hdots & 2\cos(2\pi \sigma (m-1) /n) \\
        \vdots & \vdots & \vdots & \ddots & \vdots \\
        1  & \hdots & \hdots & \hdots & 
    \end{bmatrix}
\end{align*}
is an $m$-by-$m$ matrix with $\sigma \in \{1,2,\hdots, n/2\}$ and $(\sigma, n) = 1$. 

From \cite{RSW:2018:RPP} and \cite{DD:2017}, we know that the PLWE and RLWE problems are equivalent, if the condition number of the matrix $\mathbf{M}$ is bounded by a polynomial in $n$. For a short summary of this approach, see \cite[Definition 2.3]{Blanco:2021:VMTR}. With this setup, we are ready to state and prove the two main theorems of this section, one for $n = p^s$ and another for $n = 2^r p^s$.
\begin{theorem}
\label{thm:PLWE_RLWE_equivalence_p_to_s}
    Let $p \geq 3$ be a prime and $s \geq 1$. Then PLWE and RLWE are equivalent for the maximal real subextension of the $p^s$-th cyclotomic field.
\end{theorem}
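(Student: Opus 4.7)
The plan is to show that $\kappa_F(\mathbf{M})$ is polynomially bounded in $n$, after which the PLWE--RLWE equivalence follows immediately from \cite{RSW:2018:RPP, DD:2017}. The key is to reduce the question to the cosine matrix $\mathbf{C}_{N+1}$ of Definition \ref{def:cosine_matrix_p_to_s}, whose condition number is already controlled by Lemma \ref{lemma:cond_number_of_C_N+1}.

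First, I would observe that $\mathbf{M}$ sits inside $\mathbf{C}_{N+1}$: the $m = (p-1)p^{s-1}/2$ rows of $\mathbf{C}_{N+1}$ indexed by $\sigma \in \{1,\dots,N\}$ with $\gcd(\sigma, p) = 1$ agree, when restricted to the first $m$ columns, with the rows of $\mathbf{M}$. Let $\mathbf{P}$ be the row-permutation that lists these ``good'' $\sigma$ first; the remaining rows of $\mathbf{C}_{N+1}$ then correspond either to multiples of $p$ or to the terminal all-ones row, while the remaining columns $m, \dots, N$ will be eliminated via the reduction formulas of Section \ref{sec:reduction_formulas}.

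To eliminate those columns I would construct the $(N+1)\times(N+1)$ unipotent upper-triangular matrix
\[
\mathbf{T} = \begin{pmatrix} I_m & \mathbf{C} \\ 0 & I_{N+1-m} \end{pmatrix},
\]
where the column of $\mathbf{C}$ indexed by $m+l$ (for $0 \leq l \leq N-m$) places $+1$ in the rows indexed by $jp^{s-1}+l$ for $j=0,\dots,k-1$ and by $jp^{s-1}-l$ for $j=1,\dots,k$, and zero elsewhere; these coefficients are read directly off \eqref{eq:reduction_formula_for_column_elimination}. A routine range check confirms that all such indices lie in $\{0,\dots,m-1\}$, so $\mathbf{C}$ is well-defined. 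Because \eqref{eq:reduction_formula_for_column_elimination} holds for every $\sigma$ coprime to $p$, right-multiplication by $\mathbf{T}$ zeros out the last $N+1-m$ columns of $\mathbf{C}_{N+1}$ in exactly those rows, so together with $\mathbf{P}$ we obtain the block form
\[
\mathbf{P}\,\mathbf{C}_{N+1}\,\mathbf{T} = \begin{pmatrix} \mathbf{M} & 0 \\ \mathbf{B} & \mathbf{D} \end{pmatrix}.
\]

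The polynomial bound now follows by two standard inequalities. As a submatrix, $\|\mathbf{M}\|_F \leq \|\mathbf{C}_{N+1}\|_F$. And since $\mathbf{M}^{-1}$ appears as the top-left block of $(\mathbf{P}\mathbf{C}_{N+1}\mathbf{T})^{-1} = \mathbf{T}^{-1}\mathbf{C}_{N+1}^{-1}\mathbf{P}^T$, sub-multiplicativity and orthogonality of $\mathbf{P}$ give $\|\mathbf{M}^{-1}\|_F \leq \|\mathbf{T}^{-1}\|_2 \cdot \|\mathbf{C}_{N+1}^{-1}\|_F$. Because $\mathbf{T}$ is block-unipotent with only the off-diagonal block $\mathbf{C}$ nontrivial, $\mathbf{T}^{-1}$ is obtained simply by negating $\mathbf{C}$, and since each column of $\mathbf{C}$ has at most $2k = p-1$ nonzero $\pm 1$ entries, $\|\mathbf{T}^{-1}\|_2$ is polynomial in $n$. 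Combining with Lemma \ref{lemma:cond_number_of_C_N+1} gives the desired polynomial bound on $\kappa_F(\mathbf{M})$. The main technical obstacle is the range check ensuring that all reduction indices $jp^{s-1}\pm l$ fall strictly below $m$; once this is verified, the clean block-triangular structure, the explicit form of $\mathbf{T}^{-1}$, and the subsequent norm manipulations are routine.
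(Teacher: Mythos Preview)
Your proposal is correct and follows essentially the same approach as the paper: embed $\mathbf{M}$ into $\mathbf{C}_{N+1}$ via a row permutation, kill the extra columns with a unipotent block-upper-triangular matrix built from the reduction formulas of Lemma~\ref{lemma:equivalence_p_to_s}, and then bound $\kappa_F(\mathbf{M})$ by $\kappa_F(\mathbf{C}_{N+1})$ times the norm of that elimination matrix. The only cosmetic differences are that the paper uses $\|\mathbf{R}^{-1}\|_F$ rather than $\|\mathbf{T}^{-1}\|_2$ in the final estimate and treats the $l=0$ column separately via \eqref{eq:Vm_in_quotient} (which you should do as well, since for $l=0$ your index $kp^{s-1}-l=m$ falls outside the allowed range).
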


\begin{proof}
Let $N = (p^s - 1) / 2$. Our goal is to bound the Frobenius norm of $\mathbf{M}$ and its inverse $\mathbf{M}^{-1}$ by using the bound from Lemma \ref{lemma:cond_number_of_C_N+1} on the condition number of the cosine matrix $\mathbf{C}_{N+1}$ from Definition \ref{def:cosine_matrix_p_to_s}.
First of all, notice that there exist a permutation of the rows of $\mathbf{C}_{N+1}$ and hence a permutation matrix $\mathbf{P}$ such that
\begin{align*}
    \mathbf{P} \mathbf{C}_{N+1} &=
    \begin{bmatrix*}
        \underset{m \times m}{\mathbf{M}} & \underset{m \times m'}{\mathbf{B}} \\ \underset{m' \times m}{\mathbf{A}} & \underset{m' \times m'}{\mathbf{C}}
    \end{bmatrix*},
\end{align*}
where $m' = N+1 - m = (p^{s-1} + 1) / 2$, and the submatrices $\mathbf{A}, \mathbf{C}$ and $\mathbf{C}$ have the indicated dimensions. In particular, the entries of the matrix $\mathbf{B}$ are of the form $2\cos (2\pi \sigma j / n)$ with $\sigma$ coprime to $p$ and the column index $j = m, m+1,\dots, N$. As mentioned in Section \ref{sec:reduction_formulas}, for a fixed $\sigma$ the corresponding cosine values are linearly dependent over the set $S_\sigma = \{2\cos (2\pi \sigma j / n) : j = 0,1,\hdots,m-1\}$. In particular, the columns of $\mathbf{B}$ are linearly dependent of the columns of $\mathbf{M}$, and the general dependency relations are given by Lemma \ref{lemma:equivalence_p_to_s} as formulas \eqref{eq:Vm_in_quotient} and \eqref{eq:V_m+l_in_quotient}. The explicit formula for the different values of $\sigma$ and $l$ is \eqref{eq:reduction_formula_for_column_elimination}.

Let $(\mathbf{F})_{i,l} = F_{i,l}$ be the $m\times m'$ matrix indexed by $i=0,1,\dots, m-1$ and $l=0,1,\dots, N-m$. The matrix $\mathbf{F}$ is related with the dependency relations \eqref{eq:Vm_in_quotient} and  \eqref{eq:reduction_formula_for_column_elimination} as follows:

The first column of $\mathbf{F}$ with $l=0$ is related with the dependency of $V_m(x)$ in formula \eqref{eq:Vm_in_quotient} as 
$$F_{i,0} = \left\{ \begin{array}{cl}
     1, &  p^{s-1} \mid  i  \\
     0, & \text{otherwise}.
\end{array} \right. $$
For the columns $l$ with $1 \leq l \leq N-m$, we use the dependency formula  of $V_{m + l}(x)$, \eqref{eq:reduction_formula_for_column_elimination} as
$$F_{i,l} = \left\{ \begin{array}{cl}
     1, &  i \equiv \pm l\mod p^{s-1}  \\
     0, & \text{otherwise}. 
\end{array} \right.$$
Finally, we notice by counting the number of ones in $\mathbf{F}$ that
$$||\mathbf{F}||_F^2 = \frac{p-1}{2} + 2\left(\frac{p-1}{2}\right)(N-m) = m \leq N.$$

Then we can eliminate the matrix $\mathbf{B}$ by performing column operations:
\begin{align}
\label{eq:C_N+1_columns_reduced}
    \mathbf{P} \mathbf{C}_{N+1} \mathbf{R} = &
    \begin{bmatrix*}
        \underset{m \times m}{\mathbf{M}} & \underset{m \times m'}{\mathbf{0}} \\ \underset{m' \times m}{\mathbf{A}} & \underset{m' \times m'}{\mathbf{D}}
    \end{bmatrix*},
\end{align}
where $\mathbf{R}$ is the $(N+1)\times(N+1)$ matrix given by
\begin{align*}
    {\mathbf{R}} &= 
    \begin{bmatrix*}
        \underset{m \times m}{\mathbf{I}} & \underset{m \times m'}{\mathbf{F}} \\ 
        \underset{m' \times m}{\mathbf{0}} & \underset{m' \times m'}{\mathbf{I}}
    \end{bmatrix*}.
\end{align*}

Notice that $\mathbf{M}$ is a submatrix of $\mathbf{P} \mathbf{C}_{N+1}$, so we get a strict bound on the condition number of the transform,
\begin{align*}
    ||\mathbf{M}||_F^2 < || \mathbf{P}\mathbf{C}_{N+1} ||_F^2 = ||\mathbf{C}_{N+1} ||_F^2.
\end{align*}
To bound the norm of the inverse $\mathbf{M}^{-1}$, we invert the matrix equation \eqref{eq:C_N+1_columns_reduced} to get
\begin{align}
    \mathbf{R}^{-1} \mathbf{C}_{N+1}^{-1} \mathbf{P}^{T} = &
    \begin{bmatrix*}
        \underset{m \times m}{\mathbf{M}^{-1}} & \underset{m \times m'}{\mathbf{0}} \\ \underset{m' \times m}{\mathbf{A'}} & \underset{m' \times m'}{\mathbf{D}^{-1}}
    \end{bmatrix*},
\end{align}
where the inverse of $\mathbf{R}$ is given by blockwise by
\begin{align*}
    {\mathbf{R}^{-1}} &= 
    \begin{bmatrix*}
        \underset{m \times m}{\mathbf{I}} & \underset{m \times m'}{-\mathbf{F}} \\ 
        \underset{m' \times m}{\mathbf{0}} & \underset{m' \times m'}{\mathbf{I}}
    \end{bmatrix*}.
\end{align*}

We use the sub-multiplicativity of the Frobenius norm to deduce that
\begin{align*}
    ||\mathbf{M}^{-1}||_F^2 &< || \mathbf{R}^{-1}\mathbf{C}_{N+1}^{-1} \mathbf{P}^T ||_F^2 \leq || \mathbf{R}^{-1} ||_F^2 ||\mathbf{C}_{N+1}^{-1} \mathbf{P}^T ||_F^2 \\
    &= || \mathbf{R}^{-1} ||_F^2 ||\mathbf{C}_{N+1}^{-1}||_F^2 = (N + 1 + ||\mathbf{F}||_F^2) ||\mathbf{C}_{N+1}^{-1}||_F^2  < 3N ||\mathbf{C}_{N+1}^{-1}||_F^2.
\end{align*}

By combining these two bounds, we get an upper bound for the condition number of $\mathbf{M}$, 
\begin{align*}
    \kappa_F(\mathbf{M})^2 &= ||\mathbf{M}^{-1}||_F^2 ||\mathbf{M}||_F^2
    < 3N ||\mathbf{C}_{N+1}^{-1}||_F^2 ||\mathbf{C}_{N+1}||_F^2 = 3N \kappa_F(\mathbf{C}_{N+1})^2.
\end{align*}
The final step is to use Lemma \ref{lemma:cond_number_of_C_N+1} which states that $\kappa_F(\mathbf{C}_N)^2 = \mathcal{O}(N^2)$. Putting everything together yields the desired bound 
\begin{align*}
    \kappa_F(\mathbf{M})^2 = 3N \mathcal{O}(N^2) = \mathcal{O}(N^3) = \mathcal{O}(n^3).
\end{align*}
This concludes the proof.
\end{proof}

\begin{theorem}
\label{thm:PLWE_RLWE_equivalence_p_to_s_pow_2}
    Let $p$ be an odd prime, $r \geq 2$, and $s \geq 1$. Then PLWE and RLWE are equivalent for the maximal real subextension of the $2^rp^s$-th cyclotomic field.
\end{theorem}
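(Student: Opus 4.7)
The plan is to adapt the proof of Theorem \ref{thm:PLWE_RLWE_equivalence_p_to_s} step by step, substituting the $n = 2^r p^s$ analogues of each ingredient: the cosine matrix of Lemma \ref{lemma:dct2rps}, its condition number bound in Lemma \ref{lemma:cond_number_2rps}, and the Chebyshev reduction formulas of Lemma \ref{lemma:2rps_reduction_relations}. Set $N = 2^{r-2} p^s - 1$, $k = (p-1)/2$ and $m = \phi(n)/2 = k \cdot 2^{r-1} p^{s-1}$. The rows of $\mathbf{C}_{N+1}$ are indexed by the odd integers $\sigma = 2i+1$ for $0 \leq i \leq N$, and these exhaust the odd residues in $(0, n/2)$. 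Since $r \geq 2$, an odd $\sigma$ in this range is coprime to $n$ if and only if $p \nmid \sigma$; a direct count gives exactly $m$ such $\sigma$ and $m' = N + 1 - m = 2^{r-2} p^{s-1}$ with $p \mid \sigma$. Hence a row permutation $\mathbf{P}$ puts the canonical embedding matrix $\mathbf{M}$ as the top-left $m \times m$ block of $\mathbf{P} \mathbf{C}_{N+1}$.

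The second step is to build the elimination matrix $\mathbf{F}$ of size $m \times m'$ column by column from Lemma \ref{lemma:2rps_reduction_relations}: its column labelled $l = 0$ has $k$ nonzero entries of the form $\pm 1$ at row indices $j \cdot 2^{r-1} p^{s-1}$ for $0 \leq j < k$ (from \eqref{eq:Vm_reduction_pow2}), and for $1 \leq l \leq N - m$ the $l$-th column has $2k$ nonzero $\pm 1$ entries at rows $j \cdot 2^{r-1} p^{s-1} \pm l$ (from \eqref{eq:V_m+l_reduction_pow2}). The bound $l \leq 2^{r-2} p^{s-1} - 1 < 2^{r-1} p^{s-1}$ ensures that all these row indices stay in $[0, m)$ and that the two index sets $\{j \cdot 2^{r-1} p^{s-1} + l\}$ and $\{j \cdot 2^{r-1} p^{s-1} - l\}$ are disjoint, so counting nonzeros gives
$$||\mathbf{F}||_F^2 = k + 2k(N - m) = k(2^{r-1} p^{s-1} - 1) < N.$$
Galois-equivariance of the cosine identities then makes the block matrix $\mathbf{R}$ with $\mathbf{I}$, $\mathbf{F}$, $\mathbf{0}$, $\mathbf{I}$ in its four quadrants kill the top-right block of $\mathbf{P}\mathbf{C}_{N+1}$, so $\mathbf{P}\mathbf{C}_{N+1}\mathbf{R}$ is block lower-triangular with $\mathbf{M}$ in the upper-left corner.

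From here the argument is essentially verbatim that of Theorem \ref{thm:PLWE_RLWE_equivalence_p_to_s}. Submultiplicativity of the Frobenius norm yields $||\mathbf{M}||_F^2 \leq ||\mathbf{C}_{N+1}||_F^2$, while $||\mathbf{R}^{-1}||_F^2 = (N + 1) + ||\mathbf{F}||_F^2 < 3N$ combined with the block form of $\mathbf{R}^{-1}\mathbf{C}_{N+1}^{-1}\mathbf{P}^T$ gives $||\mathbf{M}^{-1}||_F^2 < 3N \cdot ||\mathbf{C}_{N+1}^{-1}||_F^2$. Multiplying these and invoking Lemma \ref{lemma:cond_number_2rps} yields $\kappa_F(\mathbf{M})^2 = \mathcal{O}(N^3) = \mathcal{O}(n^3)$, which by \cite{DD:2017, RSW:2018:RPP} suffices for the PLWE--RLWE equivalence. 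The main obstacle is not conceptual but combinatorial: one must carefully check that the alternating signs of Lemma \ref{lemma:2rps_reduction_relations} do not disturb the Frobenius count of $\mathbf{F}$ (they do not, since only squared magnitudes enter) and that the $\pm l$-shifted index sets remain in range and disjoint throughout the allowed range of $l$, both of which hinge on the hypothesis $r \geq 2$.
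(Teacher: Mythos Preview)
Your proposal is correct and follows essentially the same approach as the paper's proof: the same row permutation, the same elimination matrix $\mathbf{F}$ built from Lemma \ref{lemma:2rps_reduction_relations} with $\|\mathbf{F}\|_F^2 = k(2^{r-1}p^{s-1}-1) < N$, and the same Frobenius-norm chain leading to $\kappa_F(\mathbf{M})^2 = \mathcal{O}(n^3)$. One small wording slip: the bound $\|\mathbf{M}\|_F^2 \leq \|\mathbf{C}_{N+1}\|_F^2$ holds because $\mathbf{M}$ is a submatrix of $\mathbf{P}\mathbf{C}_{N+1}$, not by submultiplicativity.
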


\begin{proof}
Let $N = 2^{r-2}p^s - 1$, $n = 2^r p^s$ and $m = \phi(n) / 2 = 2^{r-1}p^{s-1}k$, where $k = (p-1)/2$. Let $\mathbf{C}_{N+1}$ be the cosine matrix for $n = 2^r p^s$ from Section \ref{sec:preliminaries}. As in the proof of Theorem \ref{thm:PLWE_RLWE_equivalence_p_to_s}, there exists a matrix $\mathbf{P}$, a permutation of rows, such that 
\begin{align*}
    \mathbf{P} \mathbf{C}_{N+1} &=
    \begin{bmatrix*}
        \underset{m \times m}{\mathbf{M}} & \underset{m \times m'}{\mathbf{B}} \\ \underset{m' \times m}{\mathbf{A}} & \underset{m' \times m'}{\mathbf{C}}
    \end{bmatrix*},
\end{align*}
where $m' = N + 1 - m = 2^{r-2}p^{s-1}$, and all the matrices have the indicated dimensions. Similarly, the entries of the matrix $\mathbf{B}$ are of the form $2\cos (2\pi \sigma j / n)$ with $\sigma$ coprime to $2p$ and $j = m, m+1,\dots, N$. By Lemma \ref{lemma:2rps_reduction_relations}, the columns of $\mathbf{B}$ are linearly dependent on the columns of $\mathbf{M}$. 

We continue to construct the column elimination matrix $\mathbf{F}$ of dimension $m \times m'$. Denote by $q(i)$ the quotient of $i$ divided by $2^{r-1}
p^{s-1}$, that is, 
$$q(i)= \frac{i}{2^{r-1}p^{s-1}}.$$
Set $(\mathbf{F})_{i,l} = F_{i,l}$ indexed by $i = 0,1, \hdots, m-1$ and $l = 0,1, \hdots, N-m$ and define the entries of $\mathbf{F}$ by
$$F_{i,0} = \left\{ \begin{array}{rl}
    (-1)^{k - q(i)}, & \text{if }2^{r-1}p^{s-1} \mid i \\
     0, & \text{otherwise}.
\end{array} \right.$$
For the columns $l$ with $1 \leq l \leq N-m$,
$$F_{i,l} = \left\{ \begin{array}{rl}
   (-1)^{k - q(i-l)},  &  \text{if } i \equiv l \mod 2^{r-1}p^{s-1} \\
   (-1)^{k - q(i + l)},  & \text{if } i \equiv - l \mod 2^{r-1}p^{s-1} \\
    0, & \text{otherwise}.
\end{array} \right.$$

Therefore, 
$$|| \mathbf{F}||_F^2 = \frac{p-1}{2} + 2\left(\frac{p-1}{2}\right)(N-m) = \frac{p-1}{2}(2^{r-1}p^{s-1} -1) < N.$$

From here onwards, we follow the same steps as in the proof of Theorem \ref{thm:PLWE_RLWE_equivalence_p_to_s}, concluding  
\begin{align*}
    \kappa_F(\mathbf{M})^2 = \mathcal{O}(N^3) = \mathcal{O}(n^3).
\end{align*}
This finishes the proof.
\end{proof}

\section{Fast multiplication in the quotient ring}
\label{sec:fast_multiplication_via_DCT}
This section introduces an algorithm for fast multiplication over the polynomial quotient ring $\Z[x]/(\Psi_n(x))$ for $n = 2^r p^s$ with $p \geq 3$ a prime, $r>2$ and $s\geq 1$. The special cases $n=2^r$ and $n=2^r 3^s$ were considered in \cite{ahola2024fast}. The algorithm consists of two steps: first, fast multiplication in $\Z[x]$ via the {\em Discrete Cosine Transform} $\mathsf{DCT}$, followed by a fast reduction of the product modulo $\Psi_{n}$. The algorithm resembles the \emph{Number Theoretic Transform} (NTT). However, the NTT is typically computed over the cyclotomic ring $\Z[x]/(x^n + 1)$, where $n$ is a power of two \cite{agarwal1975ntt, pollard1971fast}. Non-power-of-two cyclotomic fields were studied in \cite{lyubashevsky2019nttru}, where the authors develop a fast multiplication algorithm in the quotient $\Z[x] / (f(x))$, where the polynomial modulus is $f(x) = x^n - x^{n / 2} + 1$ with $n = \phi(2^r 3^s)$, $r \geq 1$ and $s \geq 1$.

\subsection{Fast polynomial multiplication via the DCT}
Despite the fact that there are several fast algorithms for polynomial multiplications based on the {\em Fast Fourier Transform} $\mathsf{FFT}$, we are interested in a fast algorithm that multiplies two polynomials expressed in the $V$-basis \eqref{eq:V_basis}. The traditional methods based on the $\mathsf{FFT}$ perform the computations in the canonical power basis $\{1, x,\dots,x^{N-1}\}$. Our variant makes use of the {\em Discrete Cosine Transform} and its inverse.

The computation of the $\mathsf{DCT}$ and $\mathsf{IDCT}$ can be achieved through various algorithmic approaches, broadly categorized as indirect and direct methods. Indirect methods cleverly repurpose existing fast transformations. For instance, some early approaches, described in Ahmed et al. \cite{ahmed1974discrete}, utilize the $\mathsf{FFT}$ or Hadamard transforms to compute the $\mathsf{DCT}$. While being asymptotically effective, these indirect methods may not be optimal in terms of computational cost. 

In contrast, direct algorithms are specifically designed for the $\mathsf{DCT}$ and $\mathsf{IDCT}$. These algorithms often employ techniques like matrix factorization and recursive decomposition, drawing inspiration from the Cooley-Tukey algorithm (the foundational algorithm for the $\mathsf{FFT}$). Works by Bi and Yu, \cite{bi1998dct}, Bi \cite{bi1999fast}, Hou \cite{hou1987fast}, and Kok \cite{kok1997fast} demonstrate that direct $\mathsf{DCT}$ algorithms can achieve the same asymptotic time complexity as the $\mathsf{FFT}$, namely $\mathcal{O}(n \log n)$, through careful optimization. These direct approaches aim to minimize the number of arithmetic operations required, leading to improved practical performance.

\subsubsection{The Discrete Cosine Transform \texorpdfstring{$\mathsf{DCT}$}{DCT}}
The Discrete Cosine Transform ($\mathsf{DCT}$) is widely used in many
digital signal processing applications. Multiple fast algorithms for different types of the Discrete Cosine Transform have been reported in the literature \cite{bi1998dct, hou1987fast, kok1997fast, ahmed1974discrete}. In this paper, we shall use types II and III, which are mutual inverses when scaled properly.

\begin{definition}[$\mathsf{DCT}$]
Let $N \in \Z^+$ and $a(k), \, k= 0,1,\dots, N-1$ be a real sequence. The \emph{non-scaled type-III Discrete Cosine Transform of} $a(k)$ is the sequence $\hat{a}(j)$ defined by 
$$\hat{a}(j) = \frac{a(0)}{2} + \sum_{i=1}^{N-1} a(i)\cos\left( \frac{2\pi(2j +1) i}{4N}\right),  \quad 0\leq j \leq N-1.$$

The \emph{inverse of the type-III $\mathsf{DCT}$} is given by the type-II $\mathsf{DCT}$. The non-scaled type-II Discrete Cosine Transform of the sequence $a(k)$ is a new sequence $\tilde{a}(j)$ given by
$$ \tilde{a}(j) =  \sum_{i=0}^{N-1} a(i)\cos \left( \frac{2\pi(2i+1)j}{4N} \right), \quad 0\leq j \leq N-1.$$
\end{definition}

As mentioned before, the scaled type-III $\mathsf{DCT}$ and scaled type-II $\mathsf{DCT}$ are inverses of each other. For the non-scaled versions, we have a similar result:

\begin{lemma}\label{lemma dct} For any real sequence $a(k)$, $k=0,1,\dots, N-1$, we have 
$$\mathsf{IDCT} \left( \mathsf{DCT}(\mathbf{a}) \right) = \frac{N}{2}\mathbf{a},$$
where $\mathbf{a}$ denotes the vector $(a(0), \dots, a(N-1))^T$.
\end{lemma}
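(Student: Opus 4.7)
The plan is to prove the identity by direct substitution of the $\mathsf{DCT}$ formula into the $\mathsf{IDCT}$ formula, and then collapse the resulting double sum via the discrete orthogonality of the cosine basis functions. Setting $\mathbf{b} = \mathsf{DCT}(\mathbf{a})$, the $j$-th entry of $\mathsf{IDCT}(\mathbf{b})$ unfolds into a double sum with outer index $k$ (the $\mathsf{IDCT}$ summation variable) and inner index $\ell$ (the $\mathsf{DCT}$ summation variable). After swapping the order of summation, the task reduces to evaluating, for all pairs $\ell, j \in \{0, 1, \ldots, N-1\}$, the sum
$$\Omega(\ell, j) \;:=\; \sum_{k=0}^{N-1} \cos\!\left(\frac{(2k+1)\ell\pi}{2N}\right) \cos\!\left(\frac{(2k+1)j\pi}{2N}\right),$$
together with the single sum $\sum_{k=0}^{N-1} \cos((2k+1)j\pi/(2N))$ coming from the $\tfrac{a(0)}{2}$ term.

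The next step is to apply the product-to-sum identity $\cos A \cos B = \tfrac{1}{2}(\cos(A - B) + \cos(A + B))$, which rewrites $\Omega(\ell, j) = \tfrac{1}{2}\bigl(S(\ell - j) + S(\ell + j)\bigr)$, where
$$S(m) \;:=\; \sum_{k=0}^{N-1} \cos\!\left(\frac{(2k+1)m\pi}{2N}\right).$$
The essential sub-lemma, which I view as the main obstacle, is that $S(0) = N$ and $S(m) = 0$ for every integer $m$ with $0 < |m| < 2N$. I would establish this by writing each cosine as the real part of the complex exponential $e^{\sqrt{-1}(2k+1)m\pi/(2N)}$, factoring out the $k$-independent phase, summing the geometric progression in $k$ with common ratio $e^{\sqrt{-1}\, m\pi/N}$, and verifying that the resulting closed form is purely imaginary. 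The hypothesis $0 < |m| < 2N$ enters precisely to guarantee $e^{\sqrt{-1}\, m\pi/N} \neq 1$, so that the geometric sum formula applies.

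Because $\ell, j \in \{0, 1, \ldots, N-1\}$, both $|\ell - j|$ and $\ell + j$ lie in $\{0, 1, \ldots, 2N-2\}$, so the sub-lemma covers every case that arises. Assembling the pieces gives $\Omega(\ell, j) = N$ when $\ell = j = 0$, $\Omega(\ell, j) = N/2$ when $\ell = j \neq 0$, and $\Omega(\ell, j) = 0$ otherwise. Substituting back, the $j = 0$ component of $\mathsf{IDCT}(\mathsf{DCT}(\mathbf{a}))$ equals $\tfrac{a(0)}{2} \cdot N = \tfrac{N}{2} a(0)$ — this is precisely where the factor $\tfrac{1}{2}$ in front of $a(0)$ in the type-III $\mathsf{DCT}$ earns its keep — and for $j \neq 0$ only the diagonal term $\ell = j$ survives, contributing $\tfrac{N}{2} a(j)$. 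In either case, the $j$-th output equals $\tfrac{N}{2} a(j)$, as claimed.

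Apart from the trigonometric identity for $S(m)$, everything else is routine index bookkeeping. Since the identity itself is standard, one could alternatively cite any of the $\mathsf{DCT}$ references in the preceding paragraph rather than deriving it inline.
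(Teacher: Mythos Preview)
Your proof is correct. The paper, however, states this lemma without proof, treating it as a standard fact about the type-II/type-III DCT pair (consistent with the references to \cite{bi1998dct, hou1987fast, kok1997fast, ahmed1974discrete} in the surrounding text). Your direct approach via the orthogonality sum $S(m)$ is the standard way to establish this identity; the geometric-series evaluation showing $S(m)=0$ for $0<|m|<2N$ is exactly the computation one would expect, and your bookkeeping of the $\ell=j=0$ versus $\ell=j\neq 0$ cases correctly accounts for the $\tfrac12$ weight on $a(0)$ in the type-III transform. There is nothing to compare against in the paper itself.
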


\subsubsection{Linear Convolution}

Let $p(x)$ be a polynomial of degree less than or equal to $N-1$, then $p(x)$ can be represented in basis $\{V_0(x), V_1(x), \dots, V_{N-1}(x) \}$ as  
$$p(x) = \sum_{i = 0}^{N-1} a_iV_i(x).$$

We will write $\mathbf{a} = (a_0,a_1, \dots,a_{N-1})^T$ for the vector of coefficients in the $V$-basis. Related to the cosine transform, we define a grid of points 
\begin{align}
    x_j := 2\cos\left(\frac{2\pi(2j + 1)}{4N}\right), \quad j = 0, 1,\hdots, N-1.
    \label{def:chebyshev_nodes}
\end{align}
The evaluation of $p(x)$ at the grid points $x_j$ yields
$$p(x_j) = a_0 + 2\sum_{i=1}^{N-1}a_i\cos\left( \frac{2\pi(2j+1)i}{4N} \right).$$
By writing this in vector form for all $j$, we have the property 
$$\mathbf{\hat{p}} = 2 \, \mathsf{DCT}(\mathbf{a}),$$
where $\mathbf{\hat{p}} =(p(x_0),p(x_1),\dots,p(x_{N-1}))^T$ is the vector of all the evaluations.

Finally, let $p(x), q(x) \in \Z[x] $ and $r(x) = p(x)q(x)$. By definition, the evaluations satisfy $r(x_j) = p(x_j)q(x_j)$ for all $x_j$ in the grid. Thus, the vectors of evaluations satisfy the property
$$\mathbf{\hat{r}} = \mathbf{\hat{p}} \odot \mathbf{\hat{q}},$$ 
where $\odot$ denotes the elementwise product of vectors. Written in another form,
\begin{equation}\label{conv_formula}
\mathsf{DCT}(\mathbf{c}) = 2 \, \mathsf{DCT}(\mathbf{a}) \odot \mathsf{DCT}(\mathbf{b}),
\end{equation}
where $\mathbf{a}, \mathbf{b}$ and $\mathbf{c}$ are the coefficient vectors of the polynomials $p,q$ and $r$ in the $V$-basis as given below.

Now, let $n = p^s$ with $s \geq 1$ or $n =2^r p^s$ with $r >2 $ and $ s\geq 1$. Further, let $m = \phi(n) / 2 = [\Q(\psi_n):\Q]$, and take two polynomials $p(x)$ and $q(x)$ in $\Z[x]/(\Psi_n(x))$. We denote also by $p(x)$ and $q(x)$ the coset representatives that have degree less or equal to $m-1$. Let $N$ be closest power of $2$ greater or equal to $2m$, and express $p(x)$ and $q(x)$ in basis $V = \{ V_0(x), \dots, V_{N-1}(x) \}$ as 
\begin{align*}
    p(x) = \sum_{i = 0}^{N-1} a_iV_i(x) \quad \text{and} \quad q(x) = \sum_{i = 0}^{N-1} b_iV_i(x).
\end{align*}
Clearly, $a_i = b_i = 0$ for $i \geq m$. If we denote the product by 
$$r(x) = p(x)q(x) = \sum_{i=1}^{N-1} c_i V_i(x),$$
then $r(x)$ is the only polynomial of degree less than $N-1$ satisfying the convolution formula $(\ref{conv_formula})$. Thus we can compute the coefficients of $r(x)$ by using the $\mathsf{DCT}$ and its inverse by
\begin{align*}
    \frac{N}{2} \mathbf{c}
    &= 2\, \mathsf{IDCT} \left(\mathsf{DCT}(\mathbf{a}) \odot \mathsf{DCT}(\mathbf{b}) \right),
\end{align*}
where $\mathbf{c} = (c_0,c_1,\dots,c_{N-1})^T$ is the coefficient vector of the polynomial $r(x)$ in $V$-basis. The previous formula implies that computing the coefficient of $p(x)q(x)$ in $V$-basis takes in total two $N$-point $\mathsf{DCT}$ transforms, one $N$-point $\mathsf{IDCT}$, and $N$ multiplications.

Each $N$-point $\mathsf{DCT}$ and its inverse $\mathsf{IDCT}$ requires exactly $(l-2)2^{l-2}$ multiplications and $3(l-2)2^{l-3} - 2^{l-2} + 1$  additions, where $2^l = N$, that is $l = \lceil \log_2(m) \rceil + 1$. The derivations for the number of operations can be found in \cite{kok1997fast, hou1987fast}. This means that the asymptotic complexity of computing the coefficients of the product is $\mathcal{O}(m \log m)$.

\subsection{Reduction modulo \texorpdfstring{$\Psi_n(x)$}{Psi\_n(x)}}
We proceed by describing dependency relations on the polynomial quotient ring $\Z[x] / (\Psi_{n}(x))$. In particular, we derive a formula for reducing the degree of polynomials modulo $\Psi_{n}(x)$.

We start by considering a polynomial $c(x)$ of degree $2m - 2$ in the $V$-basis
\begin{align*}
    c(x) &= \sum_{j = 0}^{2m - 2} c_j V_j(x).
\end{align*}
Note that every product of two polynomials of degree $m-1$ can be written in this form. We are interested in finding the reduction $\overline{c}(x) \equiv c(x) \mod \Psi_{n}(x)$ such that $\deg \overline{c}(x) < m$. In other words, we want to find a fast method for computing the coefficients $\overline{c}_j$ of the reduced polynomial
\begin{align*}
    \overline{c}(x) &= \sum_{j = 0}^{m-1} \overline{c}_j V_j(x).
\end{align*}

\subsubsection{Case \texorpdfstring{$n = p^s$}{n = p\^s}}
Let $n = p^s$, $N = (p^s-1)/2$ and $m = p^{s-1}(p-1)/2$. We have already derived the reduction formulas for $V_{j}(x)$ for the degrees $m \leq j \leq N$ in Section \ref{sec:reduction_formulas}. The formulas are the Equations \eqref{eq:Vm_in_quotient} and \eqref{eq:V_m+l_in_quotient} of Lemma \ref{lemma:equivalence_p_to_s}. However, we are yet to find a reduction formula for the indices $N < j \leq 2m - 2$. To this end, for the indices $N < j \leq 2m-2$, we have the following equivalences
\begin{align*}
    V_j(x) \equiv  V_{n - j}(x) \mod{\Psi_n(x)}.
\end{align*}
by the symmetry $\cos(2\pi - x) = \cos(x)$. Therefore, the reduction of a polynomial $c(x)$ of degree $2m - 2$ to a polynomial $\overline{c}(x)$ of degree $N$ takes $2m - 2 - N < m$ additions.

After these steps, we have found a polynomial $\overline{c}(x) \equiv c(x) \mod{\Psi_n(x)}$ with $\deg \overline{c}(x) = N$. The last step is to use the relations \eqref{eq:Vm_in_quotient} and \eqref{eq:V_m+l_in_quotient} to lower the degree of the representative below $m$, that is, to find a polynomial in the quotient of $c(x)$ expressed in the basis $\{V_0(x), V_1(x), \hdots, V_{m-1}(x)\}$.

By iterating backwards over the indices $j$ such that $m \leq j < N$ and accumulating the coefficients, we need $(N - m) 2 (p-1) / 2 < m$ operations. In total, the number of operations in the whole reduction process is bounded by $2m$. The asymptotic bound is $\mathcal{O}(m)$.

\subsubsection{Case \texorpdfstring{$n = 2^r p^s$}{n = 2\^r p\^s}}
Let $n = 2^r p^s$, $N = 2^{r-2}p^s-1 $ and $m = 2^{r-2}p^{s-1}(p-1)$. Section \ref{sec:reduction_formulas} presents reduction formulas for $V_j(x)$ modulo $\Psi_n(x)$ with indices $m \leq j \leq N$. Now, we introduce similar formulas for $N < j \leq 2m - 2$. First, note that
\begin{align*}
    V_{N + 1}(x) \equiv 0 \mod{\Psi_n(x)},
\end{align*}
since
\begin{align*}
    2 \cos \left( \frac{2 \pi (N + 1)}{n} \right) &= 2 \cos \left( \frac{2 \pi 2^{r-2}p^s}{2^r p^s} \right) = \cos\left(\frac{\pi}{2} \right) = 0.
\end{align*}
Second, due to the symmetry $2 \cos(\pi - x) = - 2\cos(x)$ of the cosines, we have the equivalence
\begin{align*}
    V_{2^{r-1}p^s - j}(x) \equiv - V_j(x) \mod{\Psi_n(x)}.
\end{align*}

With these relations the reduction of a polynomial $c(x)$ of degree $2m -2$ to a polynomial $\overline{c}(x) \equiv c(x) \mod{\Psi_n(x)}$ takes as before $2m - N - 2 < m$ operations, resulting in a polynomial of degree $N$.

Finally, we use the relations \eqref{eq:Vm_reduction_pow2} and \eqref{eq:V_m+l_reduction_pow2} to reduce the degree of $\overline{c}(x)$ to $m-1$. Thus, we have found a representative $\overline{c}(x)$ of $c(x)$ of degree $m-1$ using only $(N - m)(p-1) < m$ operations. Hence, the cost of the total reduction of $c(x)$ is $\mathcal{O}(m)$.

\subsection{Fast change of basis revisited}
The PLWE problem is typically stated using polynomials in the power basis. It follows that the security guarantees that follow from the RLWE--PLWE equivalence hold in principle only if we sample our polynomials in the power basis. However, as was shown in \cite{ahola2024fast}, there exists a fast change of basis to the $V$-basis and back. Their approach works for general polynomials of degree $N-1$. This allows us to compute the product $a(x) s(x) \in \Z[x] / (\Psi_n(x))$ with quasilinear complexity $\mathcal{O}(m \log m)$, even in the case of polynomials sampled from the power basis. 
\begin{theorem}
For $n = 2^r p^s$ with $r \geq 0$ and $s \geq 1$, given two polynomials $a(x), s(x) \in \Z[x] / (\Psi_n(x))$ in the power basis, their product $a(x) \cdot s(x) \in \Z[x] / (\Psi_n(x))$ can be computed with asymptotic complexity $\mathcal{O}(n \log n)$.
\end{theorem}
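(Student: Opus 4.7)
The plan is to combine the three ingredients developed in this section into a pipeline that converts the inputs from the power basis to the $V$-basis, multiplies and reduces there, and converts back. Writing $m = \phi(n)/2 \leq n$, it suffices to show every step runs in $\mathcal{O}(m \log m)$ to obtain the claimed $\mathcal{O}(n \log n)$ bound.

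First, I would invoke the fast change of basis from the power basis $\{1, x, \dots, x^{m-1}\}$ to the modified Chebyshev basis $V = \{V_0(x), \dots, V_{m-1}(x)\}$ cited from \cite{ahola2024fast}, applying it to both $a(x)$ and $s(x)$ at cost $\mathcal{O}(m \log m)$. Second, with both polynomials now in the $V$-basis, I would zero-pad their coefficient vectors to the smallest power of two $N \geq 2m$ and compute the product in $\Z[x]$ through the DCT convolution formula
\[
\tfrac{N}{2}\,\mathbf{c} \;=\; 2\,\mathsf{IDCT}\bigl(\mathsf{DCT}(\mathbf{a}) \odot \mathsf{DCT}(\mathbf{b})\bigr),
\]
which requires two forward DCTs, one inverse DCT and $N$ scalar multiplications. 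By the algorithms of \cite{kok1997fast,hou1987fast}, each of the $N$-point transforms costs $\mathcal{O}(N \log N) = \mathcal{O}(m \log m)$.

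Third, the output $r(x)$ of the convolution has degree at most $2m-2$ and must be reduced modulo $\Psi_n(x)$. I would apply the two-phase reduction already analyzed in this section: first collapse the indices $N < j \leq 2m-2$ via the symmetry relations $V_j \equiv \pm V_{n'-j}$ specific to whether $n = p^s$ or $n = 2^r p^s$, and then eliminate the indices $m \leq j \leq N$ using the reduction formulas of Lemma \ref{lemma:equivalence_p_to_s} and Lemma \ref{lemma:2rps_reduction_relations}. Each of these phases was shown to use $\mathcal{O}(m)$ additions. Fourth and finally, I would apply the inverse change of basis from $V$-basis back to the power basis, again at cost $\mathcal{O}(m \log m)$ by \cite{ahola2024fast}.

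Since each step is $\mathcal{O}(m \log m)$ and $m \leq n$, the total is $\mathcal{O}(n \log n)$. The only real subtlety is bookkeeping: the DCT multiplies polynomials as elements of $\Z[x]$, not of the quotient, so the reduction phase is genuinely necessary, and the symmetry relations used to fold degrees above $N$ down to $N$ differ between the two cases $n = p^s$ and $n = 2^r p^s$. All of these ingredients have already been established in the preceding subsections and in \cite{ahola2024fast}, so the proof amounts to stating the pipeline and summing the costs; no new obstacle arises.
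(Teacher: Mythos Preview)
Your proposal is correct and mirrors exactly the pipeline the paper develops throughout Section~\ref{sec:fast_multiplication_via_DCT}: change of basis to $V$, DCT-based convolution, $\mathcal{O}(m)$ modular reduction via Lemmas~\ref{lemma:equivalence_p_to_s} and~\ref{lemma:2rps_reduction_relations}, and change of basis back. The paper's own proof is even terser, simply citing \cite{pan1998new} (for the fast change of basis) and \cite[Section~5]{ahola2024fast} rather than re-spelling out the pipeline you have written.
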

\begin{proof}
    The proof is a direct consequence of the general argument provided in \cite{pan1998new}. For a more detailed proof, see \cite[Section 5]{ahola2024fast}.
\end{proof}

\subsection{Computations over a finite field}
The Discrete Cosine Transform can be computed over a prime finite field $\F_q$ by writing the cosines in terms of elements in the finite field as
\begin{align*}
     \cos\left(\frac{2\pi(2j + 1)i}{M} \right) \mapsto 2^{-1} \left( \omega_{M}^{(2j + 1)i} + \omega_{M}^{-(2j + 1)i} \right),
\end{align*}
where $\omega_{M}$ is the $M$-th primitive root of unity in $\F_q$. For a more comprehensive discussion, see \cite{de2004discrete} . Here we need to choose $M = 4N$, where $N = N(n)$ is the size of the DCT. To guarantee the existence of $\omega_{M} \in \F_q$, we must choose $q \equiv 1 \mod M$.

Note that it is possible to use a composite modulus $pq$. First, we choose two primes $p \equiv q \equiv 1 \mod M$. We use the Chinese Remainder Theorem to construct an element $\omega \in \Z / pq \Z \cong \Z / p \Z \times \Z / q \Z$ such that both $\omega \mod p$ and $\omega \mod q$ have exact order $M$. Then we can perform the computations separately in the two finite fields and transfer the elements back to $\Z / pq \Z$ via the CRT isomorphism.

\section{Computational analysis of robustness against PLWE attacks}
\label{sec:computational_analysis}

Structured lattices and, in particular, structured variants of the \textit{Learning With Errors} paradigm have become one of the most promising solutions towards post-quantum cryptography. This statement is realized in the standardization selections of the First Standardization process of NIST. In July 2022, NIST announced the selection of 4 candidates to be standardized: 
\begin{itemize}
    \item Crystals-Kyber (i.e. ML-KEM \cite{FIPS:2024:203}), 
    \item Crystals-Dilithium (i.e. ML-DSA \cite{FIPS:2024:204}), 
    \item Falcon \cite{Falcon} (i.e. future FN-DSA),
    \item SPHINCS+ (i.e. SLH-DSA \cite{FIPS:2024:205}).
\end{itemize}

Out of the four selected schemes, three are based on some structured lattice paradigm. In particular, two of those selections are based on a structured variant of the LWE paradigm. Furthermore, other standardization competitions, like the Korean Post-Quantum Cryptography competition, have reached very similar results. A majority of the selected schemes have been based on structured lattices. 

The fact that an overwhelming proportion of cryptographic schemes are based on the 
structured lattice variants is due to their unique way of achieving, at the same time, acceptable performance and small cryptographic sizes for their associated elements. This follows from the additional algebraic structure. Note that in the case of unstructured LWE schemes, the size of the cryptographic keys is of 
order $\mathcal{O}(n^2)$, where $n$ is a security 
parameter which seeks to represent the desired strength of the cryptosystem. 

The algebraic structure of the structured LWE schemes decreases the size of cryptographic keys to linear order $\mathcal{O}(n)$. With regards to performance, the multiplication of elements within the structured setting can be made very efficient through a number of optimizations, such as the Toom-Cook or Karatsuba algorithms. The fast multiplication of elements is needed for a number of cryptographic subprocesses. For the power-of-two cyclotomic fields, the situation is optimal, since we can use the quasilinear Number Theoretic Transform (NTT) to perform the multiplications. The NTT can be seen as a special case of the Discrete Fourier Transform over finite fields.

\subsection{Root-based attacks against the additional structure}
The presence of additional algebraic structure can also be used as an additional source of attacks. Among them, the original work of Lauter et al. \cite{ELOS:2015:PWI, ELOS:2016:RCN} and the subsequent work \cite{BDM:2024:GARB} expanded the surface of instances that are vulnerable to algebraic attacks. The attacks exploit the information about the roots of the polynomial modulus $f(x)$ of the PLWE instance. The general approach of the attacks is as follows:
\begin{enumerate}
    \item Take a root $\alpha$ of the polynomial $f(x)$ and apply the evaluation homomorphism $x \mapsto \alpha$ to each of the samples $(a_i(x), b_i(x))$ to be distinguished.
    \item Loop through all possible values for $s(\alpha)$, that is, for the secret $s(x)$ under the evaluation homomorphism.
    \item For each guess for $s(\alpha)$, compute the associated error values $e_i(\alpha) = b_i(\alpha) - s(\alpha) \cdot a_i(\alpha)$.
    \item Perform certain distinguishability actions over the tentative evaluated errors to get a distinguishability feature.
\end{enumerate}
The particular actions taken on the tentative evaluated errors will define each of the attacks. We specify three possibilities:
\begin{enumerate}
    \item Analyze the set of possibilities for the error values, once evaluated in the root $\alpha$.
    \item Analyze the smallness of the evaluated error values when their image distribution is constrained to a certain interval.
    \item Analyze the smallness of the evaluated error values when their image distribution is not constrained to any particular interval.
\end{enumerate}

The success of all of the attacks above depends heavily on the order of certain elements of the polynomial. In \cite{ELOS:2015:PWI,ELOS:2016:RCN}, a number of attacks against the decisional PLWE are presented, assuming that the polynomial $f(x)$ has a root $\alpha$ of \textit{small} order in $\mathbb{F}_q$. In \cite{BDM:2024:GARB}, these attacks were generalized to roots belonging to arbitrary degree field extensions of $\mathbb{F}_q$, and the success of the attack depends on the polynomial $f(x)$ having certain \textit{k-ideal} factors of small order. By $k$-ideal factors, we mean factors of the form $x^k + a$, where the degree $k$ is small. The existence of these factors introduces a vulnerability if the negated constant term $-a$ has small order. 

Therefore, when considering the use of maximal real cyclotomic polynomials $\Psi_n(x)$ in the PLWE setting, it is important to analyze how effective these attacks are. In other words, we must investigate whether the maximal real cyclotomic polynomials are more prone to these important algebraic attacks than other better-known and well-studied families, such as the cyclotomics. 

In \cite{Blanco:2021:VMTR, BL:2022:RPE}, the authors show the non-existence of small roots for the maximal real cyclotomic polynomials $\Psi_n(x)$. They prove that for conductors of the form $n = 2^r k$, where $k$ is odd, the polynomials $\Psi_n(x)$ do not have roots $\alpha \in \{\pm 1, \pm 2\}$ modulo any odd prime $q$. Hence, under the given assumptions, the attacks against small order with $\alpha = \pm 1$ or small residue with $\pm 2$ are not effective for the polynomial family $\Psi_n(x)$. On the other hand, for the cyclotomic polynomials $\Phi_n(x)$, $\alpha = 2$ can be a root. However, for this to happen the prime modulus $q$ has to be in the order of $q \approx 10^{18}$, which is too large a modulus in practice.

A computational approach to study the robustness of the maximal real cyclotomic polynomials $\Psi_n(x)$ against the small-root attack was conducted in \cite{ahola2024fast}. The authors ran a massive search for primes $p < 1500$ and $q < 5 \cdot 10^{10}$ to test for small roots $\alpha \in \{\pm 2,\pm 3, \pm 4,\pm 8\}$ of the polynomials $\Psi_{4p}(x)$ and $\Phi_{p}(x)$ modulo $q$. We continue to analyze the robustness of this family by expanding our study to a larger set of attacks and a more dense set of degrees parametrized by $n = 2^r p^s$.

\subsection{Methodology}
In order to analyze whether the maximal real cyclotomic polynomials are subject to these types of attacks, we will follow the subsequent steps:
\begin{enumerate}
    \item The polynomials $\Psi_n(x)$ are constructed using the explicit formulas of Section \ref{sec:minimal_polynomial} and then expanded to be represented in the power basis.
    \item The polynomials to be analyzed are built from \textit{cryptographically relevant} parameters, i.e., from values that are expected to appear or to be used in current cryptographic schemes.
    \item For each polynomial to be evaluated, we extract all the information necessary to carry out the attacks. This includes:
    \begin{enumerate}
        \item Every small-order root ($\text{ord}(\alpha) < 5$) of $\Psi_n(x)$ seen over $\mathbb{F}_q$. 
        \item The smallest-order root of each polynomial in $\mathbb{F}_q$ (if any).
        \item Every $k$-ideal factor of the polynomial ($k < 5$ and $\text{ord}(-a) < 5$).
        \item The smallest-order $k$-ideal factor of each polynomial (if any).
    \end{enumerate}
    \item The ratio of vulnerable instances under the attacks is compared with the same computations for cyclotomic polynomials. In other words, we compare the security of the maximal real polynomials to the more established cyclotomic polynomials.
\end{enumerate}
First, we analyze the polynomials $\Psi_n(x)$ in cases that are similar to the selected structured lattice standards. We build the maximal real cyclotomic polynomials with the same degree and use the same modulus as the first three structured lattice-based standards ML-KEM, ML-DSA, and future FN-DSA. Because it is known that these schemes are not subject to root-based algebraic attacks, this comparison provides the first insight to the safety of the polynomials $\Psi_n(x)$.

Second, we consider a large sample of polynomials and moduli, and the attacks are applied to all possible combinations of the two. We evaluate the likelihood of each of these instances to be vulnerable to these attacks. Taking into account all the samples and all the attacks, this analysis yields a \emph{vulnerability ratio}: how many of the total instances were vulnerable to these attacks. 

Since we are restricted to a statistical approach, the sampled polynomials are constructed under a number of constraints which ensure certain cryptographic robustness, that is, the selection of all instances must have a large enough degree and a large enough prime modulus. The total number of test instances $(\Psi_n(x), q)$ is more than 3500, which is large enough to predict how vulnerable on average the new polynomials $\Psi_n(x)$ are to root-based attacks. For comparison, the associated cyclotomic polynomials of the same degree and modulus are tested against the same attacks, and we provide the vulnerability ratios of the attacks for both families. For the interested reader, all the required material along with computational programs and functions can be found in a Python notebook on Github. \footnote{\url{https://github.com/RodriM11/PLWE-Maximal-Real-Computations}}

\subsection{Analysis based on ML-KEM, ML-DSA and (future) FN-DSA settings}
ML-KEM, ML-DSA and (future) FN-DSA represent the first three standards that are based on a structured lattice assumption. While each of them have their own characteristics and defining details, they all have the additional algebraic structure which is relevant to the security analysis of the scheme.

As a starting point, to analyze the robustness of the polynomials $\Psi_n(x)$ against root-based attacks on PLWE, we use the exact same parameters as in the three standardized schemes:
\begin{itemize}
    \item ML-KEM: degree = 256, $q = 3329$
    \item ML-DSA: degree = 256, $q = 8380417$
    \item (future) FN-DSA (Falcon): degree $\in \{512, 1024\}$, $q = 12289$.
\end{itemize}

To simulate the same parameters for the maximal real cyclotomic polynomials, we need to define $(p, s, r)$ such that the degree of the associated polynomial matches the degree in each setting. In Section \ref{sec:minimal_polynomial}, it was established that $$\deg(\Psi_{2^r p^s}) = (p-1) \cdot p^{s-1} \cdot 2^{r-2}$$ for $r \geq 1, s \geq 1$ and $p \geq 3$ prime. Therefore, to simulate $m \in \{256, 512, 1024\}$, we use the cases 
$$(p, s, r) \in \{(5, 1, 8), (5, 1, 9), (5, 1, 10)\},$$
along with the matching modulus $q$ for each simulated scheme. 

After running the computations, we find that for each of the ML-KEM, ML-DSA, and FN-DSA settings, their respective polynomials $\Psi_n(x)$ do not have any roots in $\mathbb{F}_q$. Also, they do not have any $k$-ideal factors, that is factorizations formed by quadratic (quartic for ML-DSA) polynomials with zero intermediate coefficients. Therefore, the polynomials $\Psi_n(x)$ maintain their security against root-based attacks on the algebraic structure.

These results provide the first important hint that the use of the maximal real cyclotomic polynomials under the same conditions as cryptographically relevant cyclotomic instances is secure against a large range of algebraic attacks on PLWE.

\subsection{Statistical analysis}
The objective of the this section is to build upon the results above and study further the security of the maximal real polynomials against root-based PLWE attacks. To do so, we follow a \textit{statistical approach}. We draw a number of sample instances $(\Psi_n(x),q)$, and consider for each the same attacks as above. 

In particular, we build our sample space for the instances such that all samples satisfy the following restrictions. For the conductor $n = 2^r p^s$, we use all possible combinations of $(p, s, r) \in S$, where $S$ is defined as
$$S = \{(p,s,r) : 5 \leq p \leq 50, \, 2 \leq r \leq 9, \, 1 \leq s \leq 3, \, 256 \leq \deg(\Psi_{2^r p^s}) \leq 512\}.$$
In other words, we restrict to parameter space $p \in [5, 50]$ prime, $r \in [2, 9]$, $s \in [1, 3]$, such that the degree $m = \deg(\Psi_{2^r p^s})$ lands in $[256, 512]$. This yields $|S| = 24$ combinations. Furthermore, we sample 150 primes $q$ randomly in the range $2048 \leq q \leq 4192$ to use as the modulus. Finally, we use six different Gaussian distributions of standard deviation $\sigma \in \{2,3,4,5,6,7\}$. In total, this yields 3600 combinations $(\Psi_n(x), q)$ and 21600 PLWE instances $(\Psi_n(x), q, \sigma)$.

The most relevant results are as follows:
\begin{itemize}
    \item Out of the 3600 combinations for $(\Psi_n(x), q)$, only two pairs have small-order roots and only one has a $k$-ideal factor where $k < 5$.
    \item The vulnerability ratios of the attacks on the smallest-order root and each of the smallest-order $k$-ideal factors are given in Table \ref{tab:root_based_attacks_max_real}.
\end{itemize}
\begin{table}[H]
\caption{Vulnerability ratio of root-based attacks for $\Psi_{2^r p^s}(x)$ for $(p, s, r) \in S$, $q \in [2048, 4096]$ and $\sigma \in [2, 7]$, based on roots and $k$-ideal factors of $\Psi_{2^r p^s}(x)$.}
\label{tab:root_based_attacks_max_real}
\begin{tabular}{l|cccccc}
                        & $\sigma = 2$      & $\sigma = 3$      & $\sigma = 4$      & $\sigma = 5$      & $\sigma = 6$      & $\sigma = 7$ \\\hline
Small set (roots)       & 0                 & 0                 & 0                 & 0                 & 0                 & 0    \\
Small error (roots)     & 0.0003            & 0.0003            & 0.0003            & 0.0003            & 0.0003            & 0.0003 \\  
Small set ($k$-ideal)   & 0                 & 0                 & 0                 & 0                 & 0                 & 0    \\
Small error ($k$-ideal) & 0.0008 & 0.0008 & 0.0008 & 0.0008 & 0.0008 & 0.0008
\end{tabular}
\end{table}
The results in Table \ref{tab:root_based_attacks_max_real} provide computational evidence that these maximal real cyclotomic instances are prone to a very small vulnerability ratios for the attacks. For the attack against small error values, for each $\sigma$, we found only one vulnerable instance giving the displayed vulnerability ratio $1 / 3600 \approx 0.0003$. For the corresponding $k$-ideal attacks, only nine instances are found to be vulnerable, with a vulnerability ratio $9 / (3 \cdot 3600) \approx 0.0008$ for all $\sigma$'s. Here the extra factor of $3$ in the denominator comes from the consideration of three different factors $x^k - a$ with $k = 2,3,4$. These computational results add to the conjectured robustness of the polynomial family and gives a notion of the real utility of the polynomial family $\Psi_n(x)$ in practical PLWE instances.

To provide a measure of how small these values are in comparison, we run the same analysis for the cyclotomic counterparts of the same degree and modulus as above for $\Psi_n(x)$ parametrized by the set $S$. For cyclotomic polynomials, we did not find any instances in our sample space that were be susceptible to the any of the four algebraic attacks. Therefore, the cyclotomic polynomials seem to be more resistant than the maximal real cyclotomic polynomials. However, given that only 10 out of the 3600 maximal real cyclotomic instances were vulnerable, from the practical point of view, the maximal real cyclotomic polynomials are on average as secure as the cyclotomic polynomials in the tested range of parameters.

For the $k=1$ ideal attack, which equals the original root-based approach, the only vulnerable instance was $(p, s, r, q) = (19, 2, 2, 2887)$. It had a root $\alpha = 698$ of order $3$. For the $k$-ideal attack, a few additional vulnerable instances appear. Even though the increase is negligible when compared to the total number of instances evaluated, the increase is consistent with the theoretical analysis that higher degree finite extensions are more likely to be affected by these attacks.

The non-zero vulnerability ratios of the maximal real cyclotomic polynomials are due to the fact that particular instances have smaller order roots. For the cyclotomic polynomials, this does not happen since the elements under consideration in every type of attack (namely, the $k$-ideal factors of the form $x^k - a$, with $k \in \{1,.., 4\}$) are $\frac{n}{\text{gcd}(n, k)}$-th primitive roots of unity. Note that this is due to the fact that, if a cyclotomic polynomial has an irreducible factor of the form $x^k - a$, then this factor encloses $k$ of the $n$-th primitive roots of unity, and every power $\alpha^k$ of a $n$-th primitive root of unity is an $a$-th primitive root of unity, with $a = \frac{n}{\text{gcd}(n, k)}$.

On the other hand, the maximal real cyclotomic polynomials do not enjoy such simple relations for their roots and $k$-ideal factors, and therefore it is possible, as the numerical computations show, that certain instances of $\Psi_n(x)$ and $q$ do yield roots of small order.

In summary, these results show that for the range of samples that we tested, the maximal real cyclotomic polynomials have only a slightly higher vulnerability ratio than their cyclotomic counterparts. In both experiments, the vulnerability ratios are extremely small, which is in line with the theoretical and conjectured robustness of the PLWE problem for these instances. Overall, in both polynomial families the proportion of polynomial--prime pairs that are invulnerable to the algebraic attacks is very high. Thus, for applications we can always discard vulnerable cases, since the small vulnerability ratio implies that it is easy to find instances such that the algebraic attacks do not work. In other words, in practice one can always work with maximal real instances which are not vulnerable against any type of root-based attack.

\section*{Acknowledgements}

W. Bolaños was supported in part by the Research Council of Finland (grant \#351271, PI C. Hollanti). R. M. Sánchez-Ledesma was supported in part by the PQReact project, which has received funding from the Horizon Europe research and innovation program of the European Union under grant agreement \#101119547. Lastly, A. Haavikko was supported by the Jenny and Antti Wihuri Foundation (grant \#00240063). We would like to thank Camilla Hollanti and Iván Blanco-Chacón for their valuable comments on the drafts of this paper.

\bibliographystyle{plain}
\bibliography{references}

\end{document}